\documentclass[submission,copyright,creativecommons]{eptcs}
\providecommand{\event}{SOS 2007} 

\usepackage{iftex}
\usepackage{todonotes}

\newcommand{\note}[1]{\todo[inline, color=gray!20]{#1}}

\ifpdf
  \usepackage{underscore}         
  \usepackage[T1]{fontenc}        
\else
  \usepackage{breakurl}           
\fi

\usepackage{enumerate,xspace}
\usepackage{amsmath,amssymb,wasysym}
\usepackage[all]{xy}
\usepackage{multicol}
\usepackage{proof}
\usepackage{tikz-cd}

\newtheorem{observation}{Remark}[section]
\newtheorem{lemma}[observation]{Lemma}  
\newtheorem{theorem}[observation]{Theorem}
\newtheorem{definition}[observation]{Definition}
\newtheorem{example}[observation]{Example}
\newtheorem{remark}[observation]{Remark}

\newtheorem{proposition}[observation]{Proposition} 
\newtheorem{corollary}[observation]{Corollary}

\title{Generalized Inverses of Quantum Channels: \\a categorical perspective}
\author{Robin Cockett\thanks{Partially funded by NSERC.}
\institute{Department of Computer Science\\
University of Calgary, Canada}
\email{robin@ucalgary.ca}
\and
Jean-Simon Pacaud Lemay
\institute{School of Mathematical and Physical Sciences\\
Macquarie University, Australia}
\email{js.lemay@mq.edu.au}
\and
Priyaa Varshinee Srinivasan
\thanks{This work was co-funded by the European
Union and Estonian Research Council through the Mobilitas 3.0 (MOB3JD1227).}
\institute{Department of Software Sciences\\
Tallinn University of Technology, Estonia}
\email{priyaavarshinee@gmail.com}
}

\def\event{QPL 2026}
\def\titlerunning{Generalized Inverses of Quantum Channels: a categorical perspective}
\def\authorrunning{J.R.B. Cockett, J.-S. P. Lemay, \& P. V. Srinivasan}

\begin{document}
\maketitle

\begin{abstract}
A quantum channel is defined as being completely positive (CP) and trace preserving (TP). While not every quantum channel is invertible or reversible, every quantum channel admits various kinds of generalized inverses such as the Moore-Penrose inverse and the Drazin inverse. A generalized inverse of a quantum channel may not itself be a quantum channel: it often fails to be CP.  However, generalized inverses still play an important role in quantum error mitigation.  Here, because it is often desirable for the generalized inverse of a quantum channel to be at least TP, the Drazin inverse, which is TP, is favoured over the Moore-Penrose inverse, which is not in general TP. 

In this paper, we take a categorical perspective on generalized inverses of quantum channels. This allows us to give a simple proof of the fact that the Drazin inverse of a quantum channel is always TP. It also allows us to show that for unital quantum channels, the Drazin inverse is also unital. We then generalize this result to dagger Drazin inverses, which allows us to show that for unital quantum channels, the Moore-Penrose inverse is both TP and unital as well. This opens the door to new applications of both the Drazin inverse and Moore-Penrose inverse in quantum information theory and, in particular, in quantum error mitigation. 


\end{abstract}

\section{Introduction}

In the era of noisy intermediate-scale quantum (NISQ) computing \cite{Pre18}, both quantum error correction \cite{chatterjee2023quantum} and quantum error mitigation \cite{cao2021nisq,cai2023quantum} are essential for reliable quantum computation. Generalized inverses \cite{campbell2009generalized} of quantum channels (completely positive (CP) trace-preserving (TP) linear maps) play an important role in quantum error mitigation. 

While quantum error correction protocols  aim to eliminate noise by applying an inverse quantum channel, quantum error mitigation protocols  reduce noise by post-processing measurement results using generalized inverses  when the noise  is not reversible. For quantum error mitigation protocols, the generalized inverses of quantum channels no longer need to be physically implementable, hence need not be CP. However, it is still desirable for the generalized inverse to be TP since, for example, it can provide better stability for computer simulations. 
For this reason, the Drazin inverse is favoured over the Moore-Penrose inverse for quantum error mitigation since the Drazin inverse \cite[Chap 7]{campbell2009generalized} of a TP map is again TP \cite[Thm 2]{cao2021nisq}, while the Moore-Penrose inverse \cite[Chap 1]{campbell2009generalized} need not be \cite[Ex 2]{cao2021nisq}.

Quoting \cite{cao2021nisq}: ``{\em To our knowledge, generalized non-CPTP inverses for non-invertible quantum channels have not been studied extensively in previous literature. The new understanding of the natural representation opens the possibility of studying the structures and properties of these maps from mathematical interests. Different constructions of generalized inverse maps bring new properties. This research direction can provide a guideline for implementing these maps in EM and also in other areas of quantum information sciences.}" Motivated by this philosophy, the purpose of this paper is to continue the study generalized inverses of quantum channels from the category theory perspective, combining categorical quantum foundations and the study of generalized inverses in categories. Indeed, abstractions of quantum channels, especially the notion of being CP, are well-established and well-studied concepts in dagger compact closed categories \cite{coecke2016_cpinf,Coecke_2012,coecke2016categories,Cunningham_2015,Gogioso_2019,selinger2007dagger}. On the other hand, there has been recent interest on Moore-Penrose and Drazin inverses in dagger categories \cite{Cockett2023Moore,cockett2025dagger,cockett2024drazin,EPTCS426.3}. 

The main theoretical result of \cite{cao2021nisq} is that the Drazin inverse of a TP map is again TP, the proof \cite[App A]{cao2021nisq} of which involved  heavy duty linear algebra. The first main consequence of our categorical approach is a much more streamlined and shorter proof of the same result using two simple commuting triangles  (Prop \ref{prop:DrazinTP}), which are a consequence of the algebraic properties of Drazin inverses available in any category \cite[Prop 3.10]{cockett2024drazin}. This also allows us to show that for \textit{unital} quantum channels, the Drazin inverse is also unital (Lemma \ref{lemma:DrazinU}). 

One disadvantage of Drazin inverses is that they can only be defined for endomaps. This is solved when one considers dagger Drazin inverses, a generalization of Drazin inverses in dagger categories \cite{cockett2025dagger}. We provide the analogue of \cite[Prop 3.10]{cockett2024drazin} for dagger Drazin inverses (Prop \ref{dagger-Drazin-commuting}), which then allows us to show that the dagger Drazin inverse of a TP and untial map is again TP and unital (Prop \ref{dag-Drazin-TP}). 

As mentioned above, one of the key motivating observations in \cite{cao2021nisq} was that while the Drazin inverse of a quantum channel is always TP, its Moore-Penrose inverse need not be. The second main result of our categorical approach is that we can show that for unital quantum channels, the Moore-Penrose inverse is both TP and unital. In fact, we show that a map is TP and unital if and only if its Moore-Penrose inverse is TP and unital (Prop \ref{MP-TP}). This observation can potentially lead to new applications of Moore-Penrose inverses in quantum information theory, and in particular to applications of the Moore-Penrose inverse of unital quantum channels in quantum error mitigation. For example, mixed unitary channels are central to the study of noise and decoherence in quantum systems with wide-ranged applications including in quantum cryptography \cite{ambainis2000private,ambainis2004small,hayden2004randomizing} and error mitigation \cite{TuJ25}. Since a mixed unitary channel is a unital quantum channel, as a direct application of our result, we know that its Moore-Penrose inverse is both TP and unital. To our knowledge, this behaviour of mixed unitary channels has not been established previously. 


\textbf{Acknowledgements:} The authors thank Chris Heunen, Masahito Hasegawa, Cole Comfort, Amar Hadzihasanovic, and Martti Karvonen for useful discussions and answering our multiple questions. 

\section{Quantum Channels in Dagger Compact Closed Categories}

In this section, we review the notion of abstract quantum channels of dagger compact closed categories. We assume that the reader is familiar with the basics of dagger monoidal categories. For a detailed introduction to dagger compact closed categories, we refer the reader to \cite{selinger2010survey} (which also includes the graphical calculus for dagger compact closed categories, which we will not need to make use of in this paper), and for a detailed introduction to complete positivity in this setting, we refer the reader to \cite{selinger2007dagger}. 

For an arbitrary category $\mathbb{X}$, we will denote objects using capital letters $A,B,C$, etc. and maps will denoted as arrows $f: A \to B$, with homsets as $\mathbb{X}(A,B)$. Identity maps will denoted as $1_A: A \to A$ (or simply $1$ when there is no confusion) while composition will be denoted diagrammatically, so the composite of $f: A \to B$ and $g: B \to C$ is written as $f;g: A \to C$, which is, first do $f$ then $g$. For simplicity, we allow ourselves to work in a \textit{strict} monoidal category setting, meaning that the associativity and unital isomorphisms of the monoidal product are identities. So, for a symmetric (strict) monoidal category $\mathbb{X}$, we denotes its monoidal product as $\otimes$, its monoidal unit as $I$, and the natural symmetry isomorphism as $\sigma_{A,B}: A \otimes B \to B \otimes A$. Strictness allows us to write down $A_1 \otimes \hdots \otimes A_n$ and $A \otimes I = A = I \otimes A$. 

Now recall that a \textbf{compact closed category} \cite[Sec 4.8]{selinger2010survey} is a symmetric monoidal category such that every object has a chosen dual. Explicitly, for every object $A$, there is an object $A^\ast$, called the \textbf{dual} of $A$, equipped with a pair of maps $\cup_A: A^\ast \otimes A \to I$, called the \textbf{cup}, and $\cap_A: I \to A \otimes A^\ast$, called the \textbf{cap}, such that the famous snake equations hold. For convenience, we will work with the convention that $I^\ast = I$ and $(A \otimes B)^\ast = B^\ast \otimes A^\ast$. On the other hand, a \textbf{dagger category} \cite[Sec 7]{selinger2010survey} is a category $\mathbb{X}$ equipped with a contravariant functor $\dagger: \mathbb{X} \to \mathbb{X}$ which is the identity on objects and also involutive. Explicitly, a dagger associates every map $f: A \to B$ to a map of dual type $f^\dagger: B \to A$, called its \textbf{adjoint}, such that (i) $(f;g)^\dagger= g^\dagger; f^\dagger$, (ii) $1^\dagger = 1$, and (iii) $f^{\dagger \dagger} = f$. A \textbf{dagger symmetric monoidal category} \cite[Sec 7.2]{selinger2010survey} is a symmetric monoidal category which has a dagger which is compatible with the symmetric monoidal structure in the sense that $\otimes$ is a dagger functor, which amounts to asking that $(f \otimes g)^\dagger = f^\dagger \otimes g^\dagger$, and also that the adjoint of the natural symmetry isomorphism is its inverse, so we have $\sigma_{A,B}^\dagger = \sigma^{-1}_{A,B} = \sigma_{B,A}$ (in other words, $\sigma$ is unitary). Then a \textbf{dagger compact closed category} \cite[Sec 7.4]{selinger2010survey} (or $\dagger$KCC for short) is a compact closed category which is also a dagger symmetric monoidal category such that the adjoint of the cap (resp. cup) is the cup (resp. cap) up to a twist, so $\cap_A^\dagger =  \sigma_{A,A^\ast}; \cup_A$ and $\cup_A^\dagger = \cap_A; \sigma_{A, A^\ast}$.

\begin{example} Let $\mathbb{C}$ be the field of complex numbers, and let $\mathsf{FHILB}$ be the category of finite dimensional (complex) Hilbert spaces and $\mathbb{C}$-linear maps between them. Then $\mathsf{FHILB}$ is a $\dagger$KCC. The monoidal structure is given by the usual tensor product of Hilbert spaces, $H \otimes K$, and the monoidal unit is $\mathbb{C}$. The dual of $H$ is the space of linear functionals, $H^\ast = \mathsf{FHILB}(H, \mathbb{C})$. In bra-ket notation, if $\lbrace \vert e_i \rangle \rbrace^n_{i=1}$ is a basis of $H$, with $\lbrace \langle e_i \vert \rbrace^n_{i=1}$ the dual basis for $H^\ast$, the cup and cap are defined as $\cup_H\left(\langle e_i \vert \otimes \vert e_j \rangle \right) = \langle e_i \vert e_j \rangle$ and $\cap_H(1) = \sum^n_{i=1} \vert e_i \rangle\langle e_i \vert$. The dagger is given by the adjoint, that is, for a $\mathbb{C}$-linear map $f: H \to K$, $f^\dagger: K \to H$ is the unique map such that $\langle f(x), y \rangle = \langle x, f^\dagger(y)\rangle$. 
\end{example}

\begin{example} Let $\mathsf{REL}$ be the category of sets and relations, where recall the objects are sets and where a map $R: X \to Y$ is a subset $R \subseteq X \times Y$. Then $\mathsf{REL}$ is a $\dagger$KCC. The monoidal structure is given by the Cartesian product of sets, $X \times Y$, and the monoidal unit is a chosen singleton $\lbrace \ast \rbrace$. The dual of $X$ is itself, $X^\ast$, while the cup and cap are the subsets $\cap_X = \lbrace (\ast, x,x) \vert \forall x \in X \rbrace$ and $\cup_X: \lbrace (x,x,\ast) \vert \forall x \in X\rbrace$. The dagger is given by the converse relation, so for $R \subseteq X \times Y$, $R^\dagger = \lbrace (y,x) \vert (x,y) \in R \rbrace \subseteq Y \times X$. 
\end{example}

Now before we turn our attention to reviewing quantum channels in a $\dagger$KCC, for the intuition of the story of this paper, we find it useful to consider $\dagger$KCC from the perspective of being \textit{closed} and having a \textit{trace}. This will help provide some useful notation that may be more familiar for certain readers. 

So briefly recall that a \textbf{symmetric monoidal closed category} is a symmetric monoidal category $\mathbb{X}$ such that for every pair of objects $A$ and $B$, there is an object $[A,B]$ called the \textbf{internal hom}, such that we have natural isomorphisms $\mathbb{X}(C \otimes A, B) \cong \mathbb{X}(C, [A,B])$. Intuitively we think of $[A,B]$ as an object representing all maps from $A$ to $B$. Now every compact closed category is also symmetric monoidal closed by setting $[A,B] = A^\ast \otimes B$, in fact compact closed categories are precisely the symmetric monoidal closed categories such that the canonical map $[A,I] \otimes B \to [A,B]$ is an isomorphism. Of particular importance for the story of quantum channels will be the object $[A,A]= A^\ast \otimes A$, which we think as the object of endomaps on $A$. 

On the other hand, a \textbf{traced symmetric monoidal closed category} \cite{hasegawa2009traced} is a symmetric monoidal closed category equipped with a \textbf{trace operator} $\mathsf{Tr}$, which is a family of functions indexed by triples of objects, $\mathsf{Tr}^X_{A,B}: \mathbb{X}(X \otimes A, X \otimes B) \to \mathbb{X}(A,B)$, where for a map $f: X \otimes A \to X \otimes B$, the resulting map $\mathsf{Tr}^X_{A,B}(f): A \to B$ is called the \textbf{partial trace of $f$ at $X$}. This trace operator $\mathsf{Tr}$ also satisfies well-known axioms \cite[Sec 5]{selinger2010survey} such as naturality in the $A$ and $B$ arguments, extranaturality in the $X$ argument, superposition, and yanking. Now in the special case that $A=B=I$, we will simply denote the trace operator as $\mathsf{Tr}^X = \mathsf{Tr}^X_{I,I}$, and in this case, for an endomap $f: X \to X$, we call $\mathsf{Tr}^X(f): I \to I$ its \textbf{trace}. In a traced symmetric monoidal closed category, we can internalize the trace operator to obtain maps of type $\mathsf{tr}^X_{A,B}: [X \otimes A, X \otimes B] \to [A,B]$ and $\mathsf{tr}^X = \mathsf{tr}^X_{I,I}: [X,X] \to I$ respectfully. 

Now every compact closed category is a traced symmetric monoidal closed category in a unique way \cite[Sec 3.2]{hasegawa2009traced}. In particular, the internal trace operators are given precisely by the cups, so $\mathsf{tr}^X_{A,B} = 1_{A^\ast} \otimes \cup_X \otimes 1_B$ and $\mathsf{tr}^X= \cup_X$. 
Now if the cups gives the internal trace, what do the caps give? Caps instead give the internal way of picking out identity maps in the internal hom. Indeed, in a symmetric monoidal closed category, we have canonical maps $\mathsf{u}^X : I \to [X,X]$ and $\mathsf{u}^X_{A,B}: [A,B] \to [X \otimes A, X \otimes B]$, where the former is interpreted as picking out $1_X$ in $[X,X]$, while the latter sends $f: A \to B$ to $1_X \otimes f$. In a compact closed category, these maps are given by the cap (up to a twist): $ \mathsf{u}^X_{A,B} =  (1_{A^\ast} \otimes \cap_X \otimes 1_B);(1_{A^\ast} \otimes \sigma_{X, X^\ast} \otimes 1_B)$ and $\mathsf{u}^X= \cap_X; \sigma_{X, X^\ast}$. 
Furthermore, in a $\dagger$KCC, we also have that $({\mathsf{u}^X_{A,B}})^\dagger = \mathsf{tr}^X_{A,B}$ and $({\mathsf{u}^X})^\dagger = \mathsf{tr}^X$. 

Lastly, the final ingredient we need to properly define is the ability to conjugate maps. In a symmetric monoidal closed category, for every map $f: A \to B$ and $g: C \to D$, we obtain a map $[f,g]: [B,C] \to [A,D]$ which is interpreted precisely as pre-composing by $f$ and post-composing by $g$, so $[f,g](-) = f;-;g$. This induces a functor $[-,-]$ which contravariant in its first argument and covariant in its second argument. Now in a setting in which we have a dagger, note that for every map $f: A\to B$, we obtain the map $[f^\dagger, f]: [A,A] \to [B,B]$, which we interpret as conjugation by $f$, so $[f^\dagger, f](-) = f^\dagger; -; f$. Now in a $\dagger$KCC, for every map $f: A \to B$, we get a map $f^\ast: B^\ast \to A^\ast$ \cite[Ex 4.4]{selinger2010survey}. Then we get that $[f,g] = f^\ast \otimes g$, and therefore $[f^\dagger, f] = f_\ast \otimes f$, where recall that $(\_)_\ast = (\_)^{\dagger\ast} = (\_)^{\ast\dagger}$. 

With all this notation, we are finally in a position to review abstract quantum channels in a $\dagger$KCC. 

\begin{definition}\label{Def:UCPTP} In a $\dagger$KCC, a map $\Phi: [A,A] \to [B,B]$ is said to be:
  \begin{enumerate}[(i)]
\item \label{def:CP}\textbf{Completely Positive} \cite[Cor 4.13]{selinger2007dagger} (or \textbf{CP} for short), if there exists a map $f: A \to X \otimes B$ such that $\Phi = [f^\dagger, f]; \mathsf{tr}^X_{B,B}$; 
\item \label{def:TP}\textbf{Trace Preserving} (or \textbf{TP} for short), if $\Phi; \mathsf{tr}^B=\mathsf{tr}^A$; 
\item \label{def:U}\textbf{Unital} (or \textbf{U} for short), if $\mathsf{u}^A;\Phi = \mathsf{u}^B$
\end{enumerate}
A map which is CP and TP will be called a \textbf{CPTP} map, and a map which is CPTP and unital will be called a \textbf{UCPTP} map. 
\end{definition}

\begin{remark} Notice that the definition of being U can be defined in a symmetric monoidal closed category, being TP can be defined in a traced symmetric monoidal closed category, and in theory CP can be defined in a so-called dagger traced symmetric monoidal closed category ($\dagger$TSMCC). As such, an argument could be made to write this story in the setting of a $\dagger$TSMCC. However there a few issues with this at this time. Firstly, it is unclear what exactly should be the proper coherences for a $\dagger$TSMCC. Secondly,   while every $\dagger$KCC is a $\dagger$TSMCC, we do not currently have any good examples of a $\dagger$TSMCC which is not a $\dagger$KCC. This is also a problem for dagger symmetric monoidal closed categories \cite[Sec 9]{heunen2016monads}. We leave the exploration of $\dagger$TSMCC and quantum channels in this setting for future work. 
\end{remark}

\begin{example} In $\mathsf{FHILB}$, the object $[H,H]$ is more commonly denoted by $\mathcal{L}(H)$, the space of all $\mathbb{C}$-linear endomaps on $H$. Then, as expected, the internal trace operator $\mathsf{tr}^H: \mathcal{L}(H) \to \mathbb{C}$ sends an endomap to its trace, $\mathsf{tr}^H(\rho) = \mathsf{Tr}(\rho)$, while the internal partial trace operator $\mathsf{tr}^K_{H,H}:\mathcal{L}(K \otimes H) \to \mathcal{L}(H)$ gives of the partial trace, $\mathsf{tr}^H_{K,K}(\rho) = \mathsf{Tr}_K(\rho)$. On the other hand, $\mathsf{u}^H: \mathbb{C} \to \mathcal{L}(H)$ picks out the identity map, $\mathsf{u}^H(1) = \mathsf{id}_H$. Now recall that a $\mathbb{C}$-linear map $\Phi: \mathcal{L}(H) \to \mathcal{L}(H^\prime)$ is CP in the usual sense if when tensoring with the identity of any dimension, $\mathsf{id} \otimes \Phi$ sends positive semidefinite maps to positive semidefinite maps. Alternatively, one of the well-known equivalent characterization being CP is if there exists an $\psi: H \to K \otimes H^\prime$ such that $\Phi(\rho) = \mathsf{Tr}_K\left(\psi^\dagger \circ \rho \circ \psi\right)$ \cite[Thm 2.22]{watrous2018theory}, which is precisely what Def \ref{Def:UCPTP}.(\ref{def:CP}) corresponds to. Now $\Phi$ is TP if $\mathsf{Tr}(\Phi(\rho)) = \mathsf{Tr}(\rho)$ and U if $\Phi(\mathsf{id}_H) = \mathsf{id}_{H^\prime}$, which correspond precisely to Def \ref{Def:UCPTP}.(\ref{def:TP}) and (\ref{def:U}) respectively. 
\end{example}

\begin{example} In $\mathsf{REL}$, $[X,X]= X \times X$ and so a map $\Phi: [X,X] \to [Y,Y]$ is a subset $\Phi \subseteq X \times X \times Y \times Y$. Then $\Phi$ is CP if it respects inverses \cite[Def 5.2 \& Prop 5.3]{coecke2016categories}, that is, if it satisfies (i) $(x_1,x_2,y_1,y_2) \in R \Leftrightarrow (x_2,x_1,y_2,y_1) \in R$, and (ii) $(x_1,x_2,y_1,y_2) \in R \Rightarrow (x_1,x_1, y_1, y_1) \in R$ \cite[Prop 5.3]{coecke2016categories}. On the other hand, $\Phi$ is TP if it satisfies that for all $x_1, x_2 \in X$, $\exists y \in Y. (x_1, x_2, y, y) \in \Phi \Leftrightarrow x_1 = x_2$, and dually $\Phi$ is U if it satisfies that for all $y_1, y_2 \in Y$, $\exists x \in X. (x,x, y_1, y_2) \Leftrightarrow y_1 = y_2$. 
\end{example}

Now CP maps are a well-studied concept in a $\dagger$KCC and other frameworks. In particular, CP maps are closed under composition, monoidal product, and dagger \cite[Lem 4.17]{selinger2007dagger}. Therefore, we can build the category of CP maps of a $\dagger$KCC, called the CPM construction \cite[Def 4.18]{selinger2007dagger}, and it is itself a $\dagger$KCC \cite[Thm 4.20]{selinger2007dagger}. The CPM construction is arguably one of the most celebrated constructions in categorical quantum foundations, and it has been generalized to other settings \cite{coecke2016categories,coecke2016_cpinf,Coecke_2012,Cunningham_2015,Gogioso_2019}. On the other hand, while TP (resp. U) maps are closed under composition, monoidal product, and the caps and cups are TP (resp. U), the dagger of a TP (resp. U) map is not necessarily TP (resp. U). This is possibly one of the reasons why TP and U have taken less of central role in the study of quantum channels in categorical quantum foundations. However, it is easy to see that since TP and U are dagger adjoint properties, in the sense that we get that the adjoint of a TP is TP if and only if the map was also U to begin with, which is a well known result in quantum information \cite[Thm 2.26]{watrous2018theory}. As such, we can thus conclude that the category of UCPTP maps is $\dagger$KCC (in fact a sub-$\dagger$KCC of the CPM construction). 

\begin{lemma} In a $\dagger$KCC, a map $\Phi: [A,A] \to [B,B]$ is TP (resp. U) if and only if $\Phi^\dagger: [B,B] \to [A,A]$ is U (resp. TP). Therefore, if $\Phi$ is TP (resp. U), then $\Phi^\dagger$ is TP (resp. U) if and only if $\Phi$ is U (resp. TP). 
\end{lemma}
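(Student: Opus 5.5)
The plan is to dualize the defining equations directly, using that the dagger is a contravariant involutive functor together with the fact recorded earlier in this section that in a $\dagger$KCC $({\mathsf{u}^X})^\dagger = \mathsf{tr}^X$. Since $\dagger$ is involutive, this also gives $(\mathsf{tr}^X)^\dagger = \mathsf{u}^X$. With these two identities, being TP and being U are exchanged by applying $\dagger$ to the defining equations.

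First I take $\Phi: [A,A] \to [B,B]$ to be TP, so $\Phi;\mathsf{tr}^B = \mathsf{tr}^A$. Applying $\dagger$ to both sides and using $(f;g)^\dagger = g^\dagger;f^\dagger$ gives $(\mathsf{tr}^B)^\dagger;\Phi^\dagger = (\mathsf{tr}^A)^\dagger$, that is, $\mathsf{u}^B;\Phi^\dagger = \mathsf{u}^A$. Since $\Phi^\dagger$ has type $[B,B] \to [A,A]$, this is precisely the statement that $\Phi^\dagger$ is U. Conversely, if $\Phi^\dagger$ is U, daggering $\mathsf{u}^B;\Phi^\dagger = \mathsf{u}^A$ and using $\Phi^{\dagger\dagger} = \Phi$ recovers $\Phi;\mathsf{tr}^B = \mathsf{tr}^A$. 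The parenthetical case (U iff $\Phi^\dagger$ is TP) is the same computation read in the other direction, or equivalently it follows by applying the first case to $\Phi^\dagger$ and using involutivity.

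For the second sentence, suppose $\Phi$ is TP. By the first part, $\Phi^\dagger$ is TP if and only if $\Phi^{\dagger\dagger} = \Phi$ is U. Symmetrically, if $\Phi$ is U, then $\Phi^\dagger$ is U if and only if $\Phi$ is TP.

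The only step needing care is the identity $(\mathsf{u}^X)^\dagger = \mathsf{tr}^X$, which the paper states but does not prove. If it needs justification, I would compute $(\cap_X;\sigma_{X,X^\ast})^\dagger = \sigma_{X^\ast,X};\cap_X^\dagger = \sigma_{X^\ast,X};\sigma_{X,X^\ast};\cup_X = \cup_X$. This uses unitarity of $\sigma$ and the $\dagger$KCC axiom $\cap_X^\dagger = \sigma_{X,X^\ast};\cup_X$. Apart from this, the argument is purely formal.
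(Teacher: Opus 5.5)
Your proposal is correct and takes essentially the same approach as the paper, which also just daggers the defining equations using $(\mathsf{u}^X)^\dagger = \mathsf{tr}^X$ and $(\mathsf{tr}^X)^\dagger = \mathsf{u}^X$. Your check of $(\mathsf{u}^X)^\dagger = \mathsf{tr}^X$, from the cap axiom $\cap_X^\dagger = \sigma_{X,X^\ast};\cup_X$ and unitarity of $\sigma$, is a correct detail that the paper leaves implicit.
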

\begin{proof} In our notation, this follows from the fact that $\left({\mathsf{u}^X}\right)^\dagger = \mathsf{tr}^X$ and $\left({\mathsf{tr}^X}\right)^\dagger = \mathsf{u}^X$. 
\end{proof}

\begin{corollary} For a $\dagger$KCC, its category of UCPTP maps is a $\dagger$KCC. 
\end{corollary}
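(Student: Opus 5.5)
The plan is to realise the category of UCPTP maps as a subcategory of the CPM construction $\mathsf{CPM}(\mathbb{X})$. By \cite[Thm 4.20]{selinger2007dagger}, $\mathsf{CPM}(\mathbb{X})$ is already a $\dagger$KCC, so it suffices to show that UCPTP maps form a sub-$\dagger$KCC. Concretely, writing $\mathsf{UCPTP}(\mathbb{X})$ for the subcategory with the same objects as $\mathsf{CPM}(\mathbb{X})$ and morphisms the UCPTP maps, I would check in turn that it contains identities, is closed under composition, monoidal product and dagger, and contains the symmetries, cups and caps. Every equation required of a $\dagger$KCC then holds automatically, because it already holds in $\mathsf{CPM}(\mathbb{X})$.

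Closure under composition and identities comes first. The CP part is \cite[Lem 4.17]{selinger2007dagger}. For TP, if $\Phi;\mathsf{tr}^B=\mathsf{tr}^A$ and $\Psi;\mathsf{tr}^C=\mathsf{tr}^B$, then $\Phi;\Psi;\mathsf{tr}^C=\Phi;\mathsf{tr}^B=\mathsf{tr}^A$. Unitality is handled in the same way, precomposing with $\mathsf{u}$. Identities are clearly TP and U.

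Closure under dagger is exactly where unitality is needed. By the preceding Lemma, if $\Phi$ is TP and U, then $\Phi^\dagger$ is U (since $\Phi$ is TP) and TP (since $\Phi$ is U). CP is closed under dagger by \cite[Lem 4.17]{selinger2007dagger}. This shows why CPTP maps alone would not suffice, and why the U condition is imposed.

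The step requiring the most care is the monoidal product, together with the structure maps. For monoidal product, the objects of $\mathsf{CPM}(\mathbb{X})$ are those of $\mathbb{X}$, with $\Phi\otimes\Psi$ acting on $[A\otimes C, A\otimes C]$ after the canonical reshuffling isomorphism $[A,A]\otimes[C,C]\cong[A\otimes C,A\otimes C]$. I need to show that $\mathsf{tr}^{A\otimes C}$ corresponds to $\mathsf{tr}^A\otimes\mathsf{tr}^C$ under this isomorphism, and similarly $\mathsf{u}^{A\otimes C}$ to $\mathsf{u}^A\otimes\mathsf{u}^C$. Given the conventions $(A\otimes C)^\ast=C^\ast\otimes A^\ast$, this amounts to the standard fact that the cup (resp. cap) on $A\otimes C$ is the nested composite of the cups (resp. caps) on $A$ and $C$. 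Once that holds, $\Phi\otimes\Psi$ is TP and U whenever $\Phi$ and $\Psi$ are. Finally, the symmetry, cups and caps of $\mathsf{CPM}(\mathbb{X})$ are images of the corresponding maps of $\mathbb{X}$ under the embedding $f\mapsto[f^\dagger,f]$, as noted in the text. I would verify TP and U for them directly using the snake equations and naturality of $\sigma$. Where a structure map is unitary, TP and U reduce to the identity $[f^\dagger,f];\mathsf{tr}=\mathsf{tr}$ for unitary $f$, which holds by extranaturality of the trace. Together these give that $\mathsf{UCPTP}(\mathbb{X})$ is a sub-$\dagger$KCC of $\mathsf{CPM}(\mathbb{X})$.
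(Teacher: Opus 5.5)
Your route is the same as the paper's (see the paragraph before the TP/U duality lemma). You realise UCPTP maps inside $\mathsf{CPM}(\mathbb{X})$, get identities, composition, tensor and the symmetry directly, and get closure under $\dagger$ from ``$\Phi$ is TP iff $\Phi^\dagger$ is U''. Those steps are fine.

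The step that fails is the last one: the cups and caps of $\mathsf{CPM}(\mathbb{X})$ are not TP or U in general, and the snake equations do not give this. By the pure-map lemma, $[f^\dagger,f]$ is TP iff $f;f^\dagger=1$, and U iff $f^\dagger;f=1$. For the cap, $\cap_A;\cap_A^\dagger=\cap_A;\sigma_{A,A^\ast};\cup_A=\mathsf{u}^A;\mathsf{tr}^A=:\dim(A)$. So $[\cap_A^\dagger,\cap_A]$ is TP only when $\dim(A)=1_I$, and U would need $\cap_A^\dagger;\cap_A=1_{A\otimes A^\ast}$. Concretely, in $\mathsf{FHILB}$ with $\dim H=d\geq 2$, the CPM cap is $\lambda\mapsto\lambda\vert\Omega\rangle\langle\Omega\vert$, where $\vert\Omega\rangle=\sum_i\vert e_i\rangle\otimes\langle e_i\vert$. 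Its trace is $\lambda d$, and its value at $1$ is rank one rather than $\mathsf{id}_{H\otimes H^\ast}$. Cups and caps are not unitary, so your unitary-structure-map argument covers only $\sigma$.

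Choosing a different compact structure does not help, because the statement itself fails. If $\Phi:[A,A]\to[B,B]$ is TP and U, then
$\mathsf{u}^A;\mathsf{tr}^A=\mathsf{u}^A;\Phi;\mathsf{tr}^B=\mathsf{u}^B;\mathsf{tr}^B$,
so $\dim(A)=\dim(B)$. In $\mathsf{FHILB}$ this reads $\dim K=\mathrm{Tr}(\Phi(\mathsf{id}_K))=\dim K'$. Any cup $H'\otimes H\to I$ in the category of UCPTP maps of $\mathsf{FHILB}$ would therefore force $(\dim H')\,d=1$. Hence no object of dimension $d\geq 2$ has a dual there, and that category is not compact closed.

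The corollary as stated is false. The paper's own justification rests on the same unproved assertion that ``the caps and cups are TP (resp. U)'', so your proposal reproduces its route together with its flaw. What your argument (and the paper's) actually establishes is weaker: UCPTP maps form a dagger symmetric monoidal subcategory of $\mathsf{CPM}(\mathbb{X})$, but not a compact closed one.
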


\section{Drazin Inverses}

One of the main result in \cite{cao2021nisq} was that the Drazin inverse of a quantum channel was TP \cite[Thm 2]{cao2021nisq}. The goal of this section is to generalize this result to the setting of $\dagger$KCC. We will show that the Drazin inverse of a TP map is again TP, and that the same result holds for U maps. For an in-depth introduction to Drazin inverses in categories, we refer the reader to \cite{cockett2024drazin}. 

\begin{definition}  \label{def:Drazin}  In a category, a \textbf{Drazin inverse} \cite[Def 2.2]{cockett2024drazin} of an endomap $f: A \to A$ is an endomap $f^D: A \to A$ such that:  
\begin{enumerate}[{\bf [$\text{D}$.1]}]
\begin{multicols}{2}
\item There is a $k \in \mathbb{N}$ such that $f^{k+1} ;f^D = f^k$;
\columnbreak
\begin{multicols}{2} 
\item $f^D ;f ;f^D = f^D$; 
\columnbreak
\item $f ;f^D = f^D ;f$. 
\end{multicols}
\end{multicols}
\end{enumerate}
\vspace{-0.3cm}
If $f$ has a Drazin inverse, we say that $f$ is \textbf{Drazin invertible} or simply \textbf{Drazin}. We call the smallest natural number $k$ such that \textbf{[D.1]} holds the \textbf{Drazin index of $f$}, which we denote by $\mathsf{ind}^D(f)$. A \textbf{Drazin category} is a category such that every map is Drazin. 
\end{definition}

Drazin inverses are \emph{unique} \cite[Prop 2.3]{cockett2024drazin}, so we may speak of \emph{the} Drazin inverse of a map and that being Drazin is a property rather than structure. 

\begin{example} $\mathsf{FHILB}$ is Drazin, so every $\mathbb{C}$-linear map $\rho: H \to H$ admits a Drazin inverse $\rho^D: H \to H$. This follows from the fact that every complex square matrix admits a Drazin inverse \cite[Chap 7]{campbell2009generalized}, where the Drazin inverse computed using the Jordan–Chevalley decomposition \cite[Thm 7.2.1]{campbell2009generalized}. See also \cite[Sec 2.8]{cockett2024drazin} for more details. 
\end{example}

\begin{example} $\mathsf{REL}$ is not Drazin, since in particular the successor relation $\lbrace (n, n+1) \vert~ n \in \mathbb{\mathbb{N}} \rbrace \subseteq \mathbb{N} \times \mathbb{N}$ does not have a Drazin inverse \cite[Sec 2.10]{cockett2024drazin}. However, if we instead consider the subcategory $\mathsf{FREL}$ of \textit{finite sets} and relations, then $\mathsf{FREL}$ is Drazin, since any category whose homsets are finite sets is Drazin \cite[Lem 2.11]{cockett2024drazin}. So if $X$ is a finite set, for every subset $R \subseteq X \times X$ admits a Drazin inverse $R^D \subseteq X \times X$. 
\end{example}

Our goal is now to show that the Drazin inverse, if it exists, of a TP map is again TP. To do so, we make use of the following fact about Drazin inverses. 

\begin{proposition} \label{Drazin-commuting} \cite[Prop 3.10]{cockett2024drazin} In a category, let $f: A \to A$ and $g: B \to B$ be Drazin. For any map $k: A \to B$, if the square on the left commutes, then the square on the right commutes: 
\vspace{-0.3cm}
\[ \begin{array}[c]{c} \begin{tikzcd}
	A & A \\
	B & B
	\arrow["f", from=1-1, to=1-2]
	\arrow["k"', from=1-1, to=2-1]
	\arrow["k", from=1-2, to=2-2]
	\arrow["g"', from=2-1, to=2-2]
\end{tikzcd} \end{array} \Longrightarrow \begin{array}[c]{c} \begin{tikzcd}
	A & A \\
	B & B
	\arrow["f^D", from=1-1, to=1-2]
	\arrow["k"', from=1-1, to=2-1]
	\arrow["k", from=1-2, to=2-2]
	\arrow["g^D"', from=2-1, to=2-2]
\end{tikzcd} \end{array} \]
\end{proposition}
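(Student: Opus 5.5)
We need to show $f^D;k = k;g^D$ given $f;k = k;g$. The approach is purely equational: rewrite $f^D$ and $g^D$ using the Drazin axioms so that powers of $f$ or $g$ appear, and then slide $k$ across those powers.

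First, from $f;k=k;g$ a simple induction gives $f^n;k = k;g^n$ for all $n\in\mathbb{N}$. Let $p=\mathsf{ind}^D(f)$ and $q=\mathsf{ind}^D(g)$, and put $m=\max(p,q)$. Then \textbf{[D.1]} holds at index $m$ for both maps, since $f^{m+1};f^D = f^{m-p};f^{p+1};f^D = f^m$, and similarly for $g$.

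Next we record two standard consequences of \textbf{[D.2]} and \textbf{[D.3]}:
\[ f^D = (f^D)^{m+1};f^{m} = f^{m};(f^D)^{m+1}, \]
and likewise for $g$. These follow by iterating $f^D = f^D;f;f^D$ and commuting $f$ past $f^D$. Combining with \textbf{[D.1]} we also get $f^D;f^{m+1} = f^{m}$, using commutativity.

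The key computation runs in two directions.

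On one side,
\[ f^D;k = (f^D)^{m+1};f^m;k = (f^D)^{m+1};k;g^m. \]
We then rewrite $g^m = g^{m+1};g^D$, so this equals $(f^D)^{m+1};k;g^{m+1};g^D = (f^D)^{m+1};f^{m+1};k;g^D$. Since $(f^D)^{m+1};f^{m+1} = f^D;f$ (peel off factors using \textbf{[D.2]} and \textbf{[D.3]}), we obtain
\[ f^D;k = f^D;f;k;g^D = f^D;k;g;g^D. \]

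Symmetrically,
\[ k;g^D = k;g^m;(g^D)^{m+1} = f^m;k;(g^D)^{m+1}. \]
Writing $f^m = f^D;f^{m+1}$ gives $f^D;f^{m+1};k;(g^D)^{m+1} = f^D;k;g^{m+1};(g^D)^{m+1}$, hence
\[ k;g^D = f^D;k;g;g^D. \]

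Comparing the two displays yields $f^D;k = k;g^D$.

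The main obstacle is purely bookkeeping. One has to verify the auxiliary identities $(f^D)^{n};f^{n} = f^D;f$ for $n\ge1$ and $f^D=(f^D)^{m+1};f^m$, and to be careful to choose a common index $m$ that works for both $f$ and $g$. Each of these is a short induction using \textbf{[D.2]} and \textbf{[D.3]}.
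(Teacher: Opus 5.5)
Your proof is correct and complete. The common index $m=\max(\mathsf{ind}^D(f),\mathsf{ind}^D(g))$, the auxiliary identities $f^D=(f^D)^{m+1};f^m=f^m;(f^D)^{m+1}$ and $(f^D)^{m+1};f^{m+1}=f^D;f$, and the two computations that both land on $f^D;f;k;g^D$ all check out, including the degenerate case $m=0$. The paper gives no proof of its own here, since it imports the result from \cite[Prop 3.10]{cockett2024drazin}; your argument is the standard direct equational proof of this intertwining property, so there is nothing to flag.
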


\begin{proposition}\label{prop:DrazinTP} In a $\dagger$KCC, if $\Phi: {[A,A]} \to [A,A]$ is Drazin and also TP, then $\Phi^D: {[A,A]} \to {[A,A]}$ is also TP.
\end{proposition}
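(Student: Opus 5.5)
We apply Proposition \ref{Drazin-commuting} with $k := \mathsf{tr}^A : [A,A] \to I$, $f := \Phi$ and $g := 1_I$. The TP condition $\Phi;\mathsf{tr}^A = \mathsf{tr}^A$ can be rewritten as the commuting square
\[ \Phi ; \mathsf{tr}^A = \mathsf{tr}^A ; 1_I , \]
which is exactly the left-hand square of Proposition \ref{Drazin-commuting}, with top edge $\Phi$, vertical edges $\mathsf{tr}^A$, and bottom edge $1_I$.

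Both horizontal maps must be Drazin for Proposition \ref{Drazin-commuting} to apply. $\Phi$ is Drazin by hypothesis. The identity $1_I$ is Drazin with $1_I^D = 1_I$: in general any isomorphism is Drazin with Drazin inverse its inverse, and one checks directly that $1_I$ satisfies \textbf{[D.1]} (with $k=0$), \textbf{[D.2]} and \textbf{[D.3]}.

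Proposition \ref{Drazin-commuting} then gives the right-hand square
\[ \Phi^D ; \mathsf{tr}^A = \mathsf{tr}^A ; 1_I^D = \mathsf{tr}^A , \]
which says precisely that $\Phi^D$ is TP.

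There is no real obstacle here. The only points to check are that the TP equation is genuinely an instance of the commuting-square hypothesis, with the trivial endomap on $I$ along the bottom, and that $1_I^D = 1_I$. All of the substantive work is already contained in \cite[Prop 3.10]{cockett2024drazin}. The same argument, run with $\mathsf{u}^A : I \to [A,A]$ as $k$ and with the roles of the two endomaps swapped ($f := 1_I$, $g := \Phi$), will give the corresponding result for unital maps.
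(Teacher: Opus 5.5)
Your proposal is correct and is essentially the paper's own proof: you apply Prop~\ref{Drazin-commuting} with $f=\Phi$, $g=1_I$ (using $1_I^D=1_I$), and $k=\mathsf{tr}^A$, so that the TP square transfers to $\Phi^D$. Your closing remark, taking $k=\mathsf{u}^A$ with $f=1_I$ and $g=\Phi$, is also correct and is exactly the ``similar argument'' the paper uses for Lemma~\ref{lemma:DrazinU}.
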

\begin{proof} Identity maps $1$ are always Drazin with $1^D = 1$ \cite[Lem 3.6]{cockett2024drazin}. Therefore, by setting $g=1_I$ and $k=\mathsf{tr}^A$ in Prop \ref{Drazin-commuting}, the diagram on the left trivially commutes so the diagram on the right commutes: 
\[\begin{array}[c]{c} \begin{tikzcd}
	{[A,A]} & {[A,A]} \\
	I & I
	\arrow["\Phi", from=1-1, to=1-2]
	\arrow["{\mathsf{tr}^A}"', from=1-1, to=2-1]
	\arrow["{\mathsf{tr}^A}", from=1-2, to=2-2]
	\arrow[from=2-1, to=2-2, equal]
\end{tikzcd} \end{array} \Longrightarrow \begin{array}[c]{c}\begin{tikzcd}
	{[A,A]} & {[A,A]} \\
	I & I
	\arrow["\Phi^D", from=1-1, to=1-2]
	\arrow["{\mathsf{tr}^A}"', from=1-1, to=2-1]
	\arrow["{\mathsf{tr}^A}", from=1-2, to=2-2]
	\arrow[from=2-1, to=2-2, equal]
\end{tikzcd} \end{array} \]
Thus we conclude that the Drazin inverse is TP. 
\end{proof}

So by applying the above result to $\mathsf{FHILB}$, we obtain a novel proof of \cite[Thm 2]{cao2021nisq}, which requires none of the technical linear algebra used in \cite[App A]{cao2021nisq}. Moreover, by a similar argument, we also obtain that the Drazin inverse of U map is again U, which to the best of our knowledge is a new observation. 

\begin{lemma}\label{lemma:DrazinU} In a $\dagger$KCC, if $\Phi: {[A,A]} \to [A,A]$ is U and Drazin, then $\Phi^D: {[A,A]} \to {[A,A]}$ is also U.
\end{lemma}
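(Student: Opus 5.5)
Our plan is to dualize the argument of Prop \ref{prop:DrazinTP}. Being U says $\mathsf{u}^A;\Phi = \mathsf{u}^A$, which is a commuting square with $\mathsf{u}^A$ as the vertical maps, but now going \emph{from} $I$ \emph{into} $[A,A]$. So we apply Prop \ref{Drazin-commuting} with the roles of the two endomaps swapped. We take the top endomap to be $1_I: I \to I$, which is Drazin with $1_I^D = 1_I$ \cite[Lem 3.6]{cockett2024drazin}. We take the bottom endomap to be $g = \Phi$, which is Drazin by hypothesis, and the connecting map to be $k = \mathsf{u}^A: I \to [A,A]$.

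The left square of Prop \ref{Drazin-commuting} then reads $1_I;\mathsf{u}^A = \mathsf{u}^A;\Phi$. This is exactly the statement that $\Phi$ is U, since $1_I;\mathsf{u}^A = \mathsf{u}^A$. The proposition then gives the commuting square $1_I^D;\mathsf{u}^A = \mathsf{u}^A;\Phi^D$. Since $1_I^D = 1_I$, this simplifies to $\mathsf{u}^A;\Phi^D = \mathsf{u}^A$, which is precisely the statement that $\Phi^D$ is U.

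An alternative route uses the dagger. By the Lemma above, $\Phi$ is U if and only if $\Phi^\dagger$ is TP. Then Prop \ref{prop:DrazinTP} would apply, provided $\Phi^\dagger$ is Drazin with $(\Phi^\dagger)^D = (\Phi^D)^\dagger$. That holds because the axioms [D.1]--[D.3] are self-dual under the contravariant involution $\dagger$. However, the direct argument needs nothing beyond Prop \ref{Drazin-commuting}, so we follow it.

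There is no real obstacle. The only point needing care is the orientation: Prop \ref{Drazin-commuting} is stated for an arbitrary $k: A \to B$ between the domains of two Drazin endomaps. It therefore applies equally when the trivial endomap $1_I$ sits on top and $\Phi$ sits on the bottom, rather than the reverse arrangement used in Prop \ref{prop:DrazinTP}.
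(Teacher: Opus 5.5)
Your proof is correct and is exactly the argument the paper intends: the paper obtains this lemma ``by a similar argument'' to Prop \ref{prop:DrazinTP}, namely Prop \ref{Drazin-commuting} applied with $1_I$ on top, $\Phi$ on the bottom and $k=\mathsf{u}^A$. Your direct route also fits the paper's remark that the lemma already holds in any symmetric monoidal closed category, since it never uses the dagger.
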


Therefore, this implies that the Drazin inverse of unital quantum channel is both TP and unital. 

\begin{remark} Notice that that Lemma \ref{lemma:DrazinU} is in fact true in a symmetric monoidal closed category, and similarly Prop \ref{prop:DrazinTP} is also true in a traced symmetric monoidal closed category. 
\end{remark}

In general, however, the Drazin inverse of CP map is not CP. Here is a simple example. 

\begin{example}\label{ex:depolar} Let $\mathbb{R}$ be the field of real numbers and let $H$ be a finite dimensional Hilbert space with dimension $d \geq 1$. Then for any $a \in \mathbb{R}$, define the depolarizing channel $\mathcal{D}_a: \mathcal{L}(H) \to \mathcal{L}(H)$ as $\mathcal{D}_a(\rho) = (1-a)\cdot\rho + \frac{a \mathsf{Tr}(\rho)}{d} \cdot \mathsf{id}_H$. Now for any $a$, $\mathcal{D}_a$ is both TP and U, and if $0 \leq a \leq 1+ \frac{1}{1-d^2}$, then $\mathcal{D}_a$ is also CP -- this is a well known unital quantum channel. Now if $a \neq 0$, then $\mathcal{D}_a$ is an isomorphism with inverse $\mathcal{D}_a^{-1} = \mathcal{D}_{\frac{1}{a}}$. Since the Drazin inverse of an isomorphism is its inverse \cite[Lem 3.6]{cockett2024drazin}, we get $\mathcal{D}^D_a = \mathcal{D}_{\frac{1}{a}}$, which is both TP and U as well. However, the only $a \neq 0$ for which both $a$ and $\frac{1}{a}$ are less than or equal to $1+ \frac{1}{1-d^2}$ is $1$. Therefore, for any $0 < a \leq 1+ \frac{1}{1-d^2}$ and $a \neq 1$, $\mathcal{D}^D_a = \mathcal{D}_{\frac{1}{a}}$ is not CP. 
\end{example}

Now notice that the converse Prop \ref{prop:DrazinTP} and Lemma \ref{lemma:DrazinU} is not necessarily true, in the sense that even if the Drazin inverse is TP or U, the starting map may not be. This because, if $f$ is Drazin, then the Drazin inverse $f^D$ is also Drazin but its Drazin inverse is not necessarily $f$, instead its $f^{DD} = f f^D f$ \cite[Lem 3.11.(i)]{cockett2024drazin}. A Drazin map $f$ is the Drazin inverse of its Drazin inverse $f^D$ precisely when $\mathsf{ind}(f) \leq 1$ \cite[Lem 3.9 \& 3.11.(iii)]{cockett2024drazin}. Explicitly, an endomap $f: A \to A$ is Drazin with $\mathsf{ind}(f) \leq 1$ if and only if there is an endomap $f^D: A \to A$ such that (i) $f ;f^D ;f = f$, (ii) $f^D ;f;f^D = f^D$, and $(iii)$ $f^D ;f = f ;f^D$. In this case, we say that $f^D$ is a \textbf{group inverse} \cite[Def 3.8]{cockett2024drazin} of $f$, and we also have that $f^{DD}=f$. 

\begin{proposition} In a $\dagger$KCC, if $\Phi: {[A,A]} \to [A,A]$ Drazin with $\mathsf{ind}^D(\Phi) \leq 1$ (or equivalently has a group inverse) then $\Phi^D$ is TP (resp. U) if and only if $\Phi^D: {[A,A]} \to {[A,A]}$ is also TP (resp. U).
\end{proposition}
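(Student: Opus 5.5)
The statement as printed has a typo; the intended claim is that $\Phi$ is TP (resp. U) if and only if $\Phi^D$ is TP (resp. U). I will prove each direction by Prop \ref{prop:DrazinTP} and Lemma \ref{lemma:DrazinU}, using the fact that a group inverse is an involution: $\Phi^{DD} = \Phi$.

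\emph{Forward direction.} If $\Phi$ is TP, then Prop \ref{prop:DrazinTP} gives that $\Phi^D$ is TP. If $\Phi$ is U, then Lemma \ref{lemma:DrazinU} gives that $\Phi^D$ is U. Neither result uses the index hypothesis.

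\emph{Converse direction.} Assume $\Phi^D$ is TP (resp. U).

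First, $\Phi^D$ is itself Drazin. Indeed, by \cite[Lem 3.11.(i)]{cockett2024drazin} its Drazin inverse is $\Phi^{DD} = \Phi;\Phi^D;\Phi$.

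Next, since $\mathsf{ind}^D(\Phi) \leq 1$, $\Phi^D$ is a group inverse of $\Phi$. In particular $\Phi;\Phi^D;\Phi = \Phi$, so $\Phi^{DD} = \Phi$. This is exactly \cite[Lem 3.9 \& 3.11.(iii)]{cockett2024drazin}, as recalled just before the statement.

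Now apply Prop \ref{prop:DrazinTP} (resp. Lemma \ref{lemma:DrazinU}) to the Drazin, TP (resp. U) map $\Phi^D$. This shows that $(\Phi^D)^D = \Phi$ is TP (resp. U).

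There is also a direct route for TP that avoids $\Phi^{DD}$. Apply Prop \ref{Drazin-commuting} to the endomap $\Phi^D$ on $[A,A]$, the endomap $1_I$, and the map $k=\mathsf{tr}^A$. The left square commutes by the TP hypothesis on $\Phi^D$. The right square then reads $\Phi^{DD};\mathsf{tr}^A = \mathsf{tr}^A$. Substituting $\Phi^{DD}=\Phi$ gives the result.

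The only step needing care is the identification $\Phi^{DD} = \Phi$, and this is exactly where the index hypothesis is used. If one prefers not to cite it, verify the three Drazin axioms for $\Phi$ as an inverse of $\Phi^D$. \textbf{[D.1]} with $k=1$ reads $\Phi^D;\Phi^D;\Phi = \Phi^D$, which follows from commutation together with $\Phi^D;\Phi;\Phi^D=\Phi^D$. \textbf{[D.2]} is $\Phi;\Phi^D;\Phi=\Phi$, the group-inverse equation. \textbf{[D.3]} is commutation. Uniqueness of Drazin inverses then gives $\Phi^{DD}=\Phi$.
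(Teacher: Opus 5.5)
Your proposal is correct and follows the paper's proof. The forward direction is Prop~\ref{prop:DrazinTP} and Lemma~\ref{lemma:DrazinU}, and the converse applies the same two results to $\Phi^D$ using $\Phi^{DD}=\Phi$ from the index hypothesis. Your reading of the typo (the first occurrence should be $\Phi$, not $\Phi^D$) matches what the paper's proof establishes, and your check of the three Drazin axioms for $\Phi$ as an inverse of $\Phi^D$ is a valid self-contained replacement for citing \cite[Lem 3.9 \& 3.11.(iii)]{cockett2024drazin}.
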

\begin{proof} For the $\Rightarrow$ direction, follows immediately from Prop \ref{prop:DrazinTP} and Lemma \ref{lemma:DrazinU} respectively. While for the $\Leftarrow$, since $\Phi^D$ is Drazin and in this case have that $\Phi^{DD}=\Phi$, then we apply Prop \ref{prop:DrazinTP} and Lemma \ref{lemma:DrazinU} respectively to get that $\Phi$ is TP or U respectively. 
\end{proof}

\section{Dagger Drazin Inverses}

One of the limitations of the Drazin inverse is that it can be computed only for endomaps $A \to A$. Of course, one would like to talk about Drazin inverses of maps of arbitrary type $A \to B$. In \cite{cockett2025dagger}, the authors achieved this by introducing the notion of dagger Drazin inverses for maps in a dagger category. 

\begin{definition}\label{def:dagger-Drazin}  In a dagger category, a \textbf{$\dagger$-Drazin inverse} \cite[Def 2.1]{cockett2025dagger} of a map $f: A \to B$ is a map of dual type $f^\partial: B \to A$ such that:  
\begin{enumerate}[{\bf [$\text{D}^{\dagger}$.1]}]
\item There is a $k \in \mathbb{N}$ such that $(f ;f^\dagger)^k ;f; f^\partial = (f ;f^\dagger)^k$ and $f^\partial ;f ;(f^\dagger; f)^k = (f^\dagger ;f)^k$; 
\vspace{-0.3cm}
\begin{multicols}{3}
\item $f^\partial; f ;f^\partial = f^\partial$; 
\columnbreak
\item $(f;f^\partial)^\dagger = f;f^\partial$;
\columnbreak
\item $(f^\partial ;f)^\dagger = f^\partial; f$.
\end{multicols}
\end{enumerate}
\vspace{-0.3cm}
If $f$ has a $\dagger$-Drazin inverse, we say that $f$ is \textbf{$\dagger$-Drazin}. 
\end{definition}

Now $\dagger$-Drazin inverses are also unique \cite[Prop 2.2]{cockett2025dagger}, so we may speak \textit{the} $\dagger$-Drazin inverse. Moreover, $\dagger$-Drazin inverses are fundamentally linked to Drazin inverses in the following way. 

\begin{theorem}\label{thm:Drazin=dag-Drazin}\cite[Thm 3.6]{cockett2025dagger}  In a dagger category, for a map $f$, the following are equivalent: 
\begin{enumerate}[(i)]
\begin{multicols}{4}
\item $f$ is $\dagger$-Drazin;
\columnbreak
\item $f^\dagger$ is $\dagger$-Drazin;
\columnbreak
\item $f ;f^\dagger$ is Drazin; 
\columnbreak
\item $f^\dagger ;f$ is Drazin.
\end{multicols}
\end{enumerate}
Moreover, if $f$ is $\dagger$-Drazin, then:
\vspace{-0.3cm}
\begin{align*}
    f^\partial = f^\dagger ;(f ;f^\dagger)^D = (f^\dagger ;f)^D ;f^\dagger && {f^\dagger}^\partial = f ;(f^\dagger ;f)^D  =  (f ;f^\dagger)^D ;f && (f ;f^\dagger)^D = {f^\dagger}^\partial; f^\partial && (f^\dagger ;f)^D = f^\partial ;{f^\dagger}^\partial
\end{align*}
\end{theorem}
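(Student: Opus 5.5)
The plan is to first establish the symmetry (i)$\Leftrightarrow$(ii), then prove (i)$\Leftrightarrow$(iii) directly from the axioms. The equivalence with (iv) then follows by applying (i)$\Leftrightarrow$(iii) to $f^\dagger$ in place of $f$, since $f^{\dagger\dagger}=f$. All the displayed formulas will fall out of the uniqueness of Drazin inverses \cite[Prop 2.3]{cockett2024drazin} and of $\dagger$-Drazin inverses \cite[Prop 2.2]{cockett2025dagger}.

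\textbf{Step 1: (i)$\Leftrightarrow$(ii), with ${f^\dagger}^\partial = (f^\partial)^\dagger$.} Suppose $f$ has $\dagger$-Drazin inverse $f^\partial$. Taking daggers of \textbf{[D$^\dagger$.1]}--\textbf{[D$^\dagger$.4]} shows that $(f^\partial)^\dagger$ satisfies the axioms for $f^\dagger$.
\begin{itemize}
\item Daggering the two equations of \textbf{[D$^\dagger$.1]} swaps them and replaces $f$ by $f^\dagger$.
\item \textbf{[D$^\dagger$.2]} is preserved under dagger.
\item Since $f^\dagger;(f^\partial)^\dagger = (f^\partial;f)^\dagger$, \textbf{[D$^\dagger$.4]} for $f$ gives \textbf{[D$^\dagger$.3]} for $f^\dagger$, and symmetrically \textbf{[D$^\dagger$.3]} for $f$ gives \textbf{[D$^\dagger$.4]} for $f^\dagger$.
\end{itemize}
Applying the same argument to $f^\dagger$ gives the converse.

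\textbf{Step 2: (i)$\Rightarrow$(iii).} The candidate is $(f;f^\dagger)^D := (f^\partial)^\dagger;f^\partial$. Two consequences of \textbf{[D$^\dagger$.3]}--\textbf{[D$^\dagger$.4]} do most of the work:
\[ f;f^\partial = (f^\partial)^\dagger;f^\dagger, \qquad f^\partial;f = f^\dagger;(f^\partial)^\dagger. \]
For \textbf{[D.3]}, I would compute both composites.
\begin{itemize}
\item $f;f^\dagger;(f^\partial)^\dagger;f^\partial = f;f^\partial;f;f^\partial = f;f^\partial$.
\item $(f^\partial)^\dagger;f^\partial;f;f^\dagger = (f^\partial)^\dagger;f^\dagger;(f^\partial)^\dagger;f^\dagger = (f;f^\partial)^2 = f;f^\partial$.
\end{itemize}
So both composites equal $f;f^\partial$. \textbf{[D.2]} then reduces to $(f^\partial)^\dagger;f^\partial;f;f^\partial = (f^\partial)^\dagger;f^\partial$, which is \textbf{[D$^\dagger$.2]}. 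For \textbf{[D.1]}, take index $k$ from \textbf{[D$^\dagger$.1]}: $(f;f^\dagger)^{k+1};(f;f^\dagger)^D = (f;f^\dagger)^{k};f;f^\partial = (f;f^\dagger)^k$.

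\textbf{Step 3: (iii)$\Rightarrow$(i).} Set $f^\partial := f^\dagger;(f;f^\dagger)^D$ and write $g=f;f^\dagger$, which is self-adjoint. I would first note that $g^D$ is self-adjoint, because $(g^D)^\dagger$ satisfies the Drazin axioms for $g^\dagger=g$ and Drazin inverses are unique.
\begin{itemize}
\item \textbf{[D$^\dagger$.3]}: $f;f^\partial = g;g^D$, which is self-adjoint by \textbf{[D.3]}.
\item \textbf{[D$^\dagger$.2]}: $f^\partial;f;f^\partial = f^\dagger;g^D;g;g^D = f^\partial$.
\item \textbf{[D$^\dagger$.4]}: $f^\partial;f = f^\dagger;g^D;f$, whose dagger is $f^\dagger;g^D;f$ again.
\item First half of \textbf{[D$^\dagger$.1]}: $g^k;g;g^D = g^k$ is exactly \textbf{[D.1]}.
\end{itemize}

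\textbf{Main obstacle.} The hard step is the second half of \textbf{[D$^\dagger$.1]}, namely $f^\dagger;g^D;f;(f^\dagger;f)^{k} = (f^\dagger;f)^{k}$. This requires sliding $f$ through powers: $f;(f^\dagger;f)^k = g^k;f$. It then suffices to have $f^\dagger;g^D;g^{k};f = f^\dagger;g^{k-1};f$ for $k\ge 1$ (increasing the index by one if necessary), which follows from \textbf{[D.1]} and \textbf{[D.3]}.

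\textbf{Formulas.} The remaining identities follow by uniqueness. By Step 1, ${f^\dagger}^\partial = (f^\partial)^\dagger$. By Steps 2 and 3, $(f;f^\dagger)^D = {f^\dagger}^\partial;f^\partial$ and $f^\partial = f^\dagger;(f;f^\dagger)^D$. The mirror formulas $f^\partial = (f^\dagger;f)^D;f^\dagger$, ${f^\dagger}^\partial = f;(f^\dagger;f)^D = (f;f^\dagger)^D;f$ and $(f^\dagger;f)^D = f^\partial;{f^\dagger}^\partial$ are obtained by applying the same results to $f^\dagger$ and taking daggers.
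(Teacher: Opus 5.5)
Your proof is correct. The paper gives no argument of its own for this statement; it imports it as \cite[Thm 3.6]{cockett2025dagger}. What you have written is therefore a self-contained verification from the axioms, and the computations check out. With $g = f;f^\dagger$, the identities $f;f^\partial = {f^\partial}^\dagger;f^\dagger$ and $f^\partial;f = f^\dagger;{f^\partial}^\dagger$ give $g;g^D = g^D;g = f;f^\partial$ for the candidate $g^D = {f^\partial}^\dagger;f^\partial$. In the converse direction, the self-adjointness of $g^D$ is exactly what yields \textbf{[D$^\dagger$.3]} and \textbf{[D$^\dagger$.4]}. It is also what later lets you dagger $f;(f^\dagger;f)^D$ into $(f^\dagger;f)^D;f^\dagger$ for the mirrored formulas.

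Two points in your ``main obstacle'' should be spelled out rather than left parenthetical. First, \textbf{[D$^\dagger$.1]} asks for a single $k$ serving both equations. So take $k = \mathsf{ind}^D(g)+1$ and note that $g^{k+1};g^D = g^k$ persists for all $k \geq \mathsf{ind}^D(g)$; hence the first half survives the shift. Second, the shift really is necessary. If $f;f^\dagger = 1_A$, then $\mathsf{ind}^D(g)=0$ and $f^\partial = f^\dagger$, but the second half with $k=0$ would demand $f^\dagger;f = 1_B$.

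On the route: you get (iii)$\Leftrightarrow$(iv) for free from the symmetry $f \leftrightarrow f^\dagger$ of Step 1 combined with (i)$\Leftrightarrow$(iii). An alternative is Cline's formula, $(b;a)^D = b;((a;b)^D)^2;a$. With $a=f$ and $b=f^\dagger$ it gives $(f^\dagger;f)^D = f^\dagger;((f;f^\dagger)^D)^2;f$ directly, and this agrees with your $(f^\dagger;f)^D = f^\partial;{f^\dagger}^\partial$. Your version has the merit of relying only on the defining axioms and on uniqueness of (dagger) Drazin inverses. It needs no auxiliary results about Drazin inverses of products.
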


As such, the above characterization of $\dagger$-Drazin inverses implies that in a dagger category which is also Drazin, every map has a $\dagger$-Drazin inverse. 

Now, as expected, the $\dagger$-Drazin inverse of a CP map need not be CP, Ex \ref{ex:depolar} provides an example of such (since in this case, the Drazin inverse and $\dagger$-Drazin inverse coincide). On the other hand, we now wish to generalize that Drazin inverse preserves being TP and U for $\dagger$-Drazin inverses. However, we quickly come across an issue since we know that the Moore-Penrose inverse (which as we will soon review is a special case of a $\dagger$-Drazin inverse) of a TP map need not be TP \cite[Ex 2]{cao2021nisq}. To resolve this issue, as we will see below, it turns that if a map is both TP and U, then its $\dagger$-Drazin inverse will be TP and U. In order to prove this, we first generalize \cite[Prop 3.10]{cockett2024drazin} for $\dagger$-Drazin inverses. 

\begin{proposition}\label{dagger-Drazin-commuting} In a dagger category, suppose that $f: A \to B$ and $g: C \to D$ are $\dagger$-Drazin. Then if the two diagrams on the left commute, then the two diagrams on the right commute: 
\vspace{-0.3cm}
\[ \begin{array}[c]{c} \begin{tikzcd}
	A & B & B & A \\
	C & D & D & C
    	\arrow["{(\textbf{a})}"{description}, draw=none, from=1-1, to=2-2]
        \arrow["{(\textbf{b})}"{description}, draw=none, from=1-3, to=2-4]
	\arrow["f", from=1-1, to=1-2]
	\arrow["h"', from=1-1, to=2-1]
	\arrow["k", from=1-2, to=2-2]
	\arrow["{f^\dagger}", from=1-3, to=1-4]
	\arrow["k"', from=1-3, to=2-3]
	\arrow["h", from=1-4, to=2-4]
	\arrow["g"', from=2-1, to=2-2]
	\arrow["{g^\dagger}"', from=2-3, to=2-4]
\end{tikzcd} \end{array} \Longrightarrow \begin{array}[c]{c} \begin{tikzcd}
	B & A & A & B \\
	D & C & C & D
    \arrow["{(\textbf{c})}"{description}, draw=none, from=1-1, to=2-2]
        \arrow["{(\textbf{d})}"{description}, draw=none, from=1-3, to=2-4]
	\arrow["{f^\partial}", from=1-1, to=1-2]
	\arrow["k"', from=1-1, to=2-1]
	\arrow["h", from=1-2, to=2-2]
	\arrow["{f^{\partial\dagger}= f^{\dagger\partial}}", from=1-3, to=1-4]
	\arrow["h"', from=1-3, to=2-3]
	\arrow["k", from=1-4, to=2-4]
	\arrow["{g^\partial}"', from=2-1, to=2-2]
	\arrow["{g^{\partial\dagger}= g^{\dagger\partial}}"', from=2-3, to=2-4]
\end{tikzcd} \end{array} \]
\vspace{-0.5cm}
\end{proposition}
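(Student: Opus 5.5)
The plan is to reduce to the ordinary Drazin case, Prop \ref{Drazin-commuting}, using the formulas of Thm \ref{thm:Drazin=dag-Drazin}.

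First, paste squares (\textbf{a}) and (\textbf{b}). Since $f;k = h;g$ and $f^\dagger; h = k; g^\dagger$, we obtain
\[ f;f^\dagger;h = f;k;g^\dagger = h;g;g^\dagger , \qquad f^\dagger;f;k = f^\dagger;h;g = k;g^\dagger;g .\]
So $h$ intertwines the endomaps $f;f^\dagger$ and $g;g^\dagger$, and $k$ intertwines $f^\dagger;f$ and $g^\dagger;g$. By Thm \ref{thm:Drazin=dag-Drazin}, all four of these endomaps are Drazin, because $f$ and $g$ are $\dagger$-Drazin. Applying Prop \ref{Drazin-commuting} to each of them gives
\[ (f;f^\dagger)^D;h = h;(g;g^\dagger)^D \quad\text{and}\quad (f^\dagger;f)^D;k = k;(g^\dagger;g)^D .\]

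Next, obtain (\textbf{c}) using the formula $f^\partial = f^\dagger;(f;f^\dagger)^D$:
\[ f^\partial;h = f^\dagger;(f;f^\dagger)^D;h = f^\dagger;h;(g;g^\dagger)^D = k;g^\dagger;(g;g^\dagger)^D = k;g^\partial ,\]
where the third equality uses (\textbf{b}).

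Similarly, obtain (\textbf{d}) using the formula $f^{\dagger\partial} = f;(f^\dagger;f)^D$:
\[ f^{\dagger\partial};k = f;(f^\dagger;f)^D;k = f;k;(g^\dagger;g)^D = h;g;(g^\dagger;g)^D = h;g^{\dagger\partial} ,\]
where the third equality uses (\textbf{a}).

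The only remaining point is the label identity $f^{\partial\dagger} = f^{\dagger\partial}$. This should follow by applying $\dagger$ to $f^\partial = (f^\dagger;f)^D;f^\dagger$, which gives $f^{\partial\dagger} = f;((f^\dagger;f)^D)^\dagger$. We then need $((f^\dagger;f)^D)^\dagger = (f^\dagger;f)^D$. This holds because the dagger of a Drazin inverse is the Drazin inverse of the dagger: the axioms [D.1]–[D.3] are self-dual under $\dagger$, so uniqueness of Drazin inverses gives it, and $f^\dagger;f$ is self-adjoint. Alternatively, one can cite uniqueness of $\dagger$-Drazin inverses directly, since the axioms in Def \ref{def:dagger-Drazin} are symmetric under $\dagger$.

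The main obstacle is the first step: seeing that the two given squares must be composed to produce intertwiners for $f;f^\dagger$ and $f^\dagger;f$, since neither square alone suffices. Once that is done, everything else is substitution.
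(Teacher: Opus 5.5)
Your proposal is correct and is essentially the paper's proof: paste (\textbf{a}) and (\textbf{b}) into intertwining squares for $f;f^\dagger$ and $f^\dagger;f$, apply Prop \ref{Drazin-commuting}, and substitute the formulas of Thm \ref{thm:Drazin=dag-Drazin}. The differences are cosmetic: you use $f^\partial = f^\dagger;(f;f^\dagger)^D$ and $f^{\dagger\partial} = f;(f^\dagger;f)^D$ where the paper uses $(f^\dagger;f)^D;f^\dagger$ and $(f;f^\dagger)^D;f$, and you also justify the label identity $f^{\partial\dagger}=f^{\dagger\partial}$, which the paper takes as known.
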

\begin{proof} By pasting (\textbf{a}) and (\textbf{b}) together, we obtain the diagrams (\textbf{e}) and (\textbf{f}) below. By Thm \ref{thm:Drazin=dag-Drazin}, we know that $f;f^\dagger$, $g;g^\dagger$, $f^\dagger;f$, and $g^\dagger;g$ are all Drazin. Then by Prop \ref{Drazin-commuting}, since (\textbf{e}) and (\textbf{f}) commute, we get that (\textbf{g}) and (\textbf{h}) commute respectively: 
\[ \begin{array}[c]{c} \begin{tikzcd}
	A & A \\
	C & C
    \arrow["{(\textbf{e})}"{description}, draw=none, from=1-1, to=2-2]
	\arrow["f;f^\dagger", from=1-1, to=1-2]
	\arrow["h"', from=1-1, to=2-1]
	\arrow["h", from=1-2, to=2-2]
	\arrow["g;g^\dagger"', from=2-1, to=2-2]
\end{tikzcd} \end{array} \Longrightarrow \begin{array}[c]{c} \begin{tikzcd}
	A & A \\
	C & C
    \arrow["{(\textbf{g})}"{description}, draw=none, from=1-1, to=2-2]
	\arrow["(f;f^\dagger)^D", from=1-1, to=1-2]
	\arrow["h"', from=1-1, to=2-1]
	\arrow["h", from=1-2, to=2-2]
	\arrow["(g;g^\dagger)^D"', from=2-1, to=2-2]
\end{tikzcd} \end{array}  ,\qquad \begin{array}[c]{c} \begin{tikzcd}
	B & B \\
	D & D
    \arrow["{(\textbf{f})}"{description}, draw=none, from=1-1, to=2-2]
	\arrow["f^\dagger;f", from=1-1, to=1-2]
	\arrow["k"', from=1-1, to=2-1]
	\arrow["k", from=1-2, to=2-2]
	\arrow["g^\dagger;g"', from=2-1, to=2-2]
\end{tikzcd} \end{array} \Longrightarrow \begin{array}[c]{c} \begin{tikzcd}
	B & B \\
	D & D
         \arrow["{(\textbf{h})}"{description}, draw=none, from=1-1, to=2-2]
	\arrow["(f^\dagger;f)^D", from=1-1, to=1-2]
	\arrow["k"', from=1-1, to=2-1]
	\arrow["k", from=1-2, to=2-2]
	\arrow["(g^\dagger;g)^D"', from=2-1, to=2-2]
\end{tikzcd} \end{array}
\]
Then by pasting (\textbf{g}) with (\textbf{b)} and (\textbf{h}) with (\textbf{a}), we get that the following diagrams commute: 
\[
  \begin{tikzcd}
	A & A & B \\
	C & C & D
	\arrow["{(f;f^\dagger)^D}", from=1-1, to=1-2]
	\arrow["h"', from=1-1, to=2-1]
	\arrow["{{(\textbf{g})}}"{description}, draw=none, from=1-1, to=2-2]
	\arrow["f", from=1-2, to=1-3]
	\arrow["h", from=1-2, to=2-2]
	\arrow["{{(\textbf{a})}}"{description}, draw=none, from=1-2, to=2-3]
	\arrow["k", from=1-3, to=2-3]
	\arrow["{(g;g^\dagger)^D}"', from=2-1, to=2-2]
	\arrow["g"', from=2-2, to=2-3]
\end{tikzcd} \qquad  \begin{tikzcd}
	B & B & A \\
	D & D & B
	\arrow["{(f^\dagger;f)^D}", from=1-1, to=1-2]
	\arrow["k"', from=1-1, to=2-1]
	\arrow["{{(\textbf{h})}}"{description}, draw=none, from=1-1, to=2-2]
	\arrow["{f^\dagger}", from=1-2, to=1-3]
	\arrow["k", from=1-2, to=2-2]
	\arrow["{{(\textbf{b})}}"{description}, draw=none, from=1-2, to=2-3]
	\arrow["h", from=1-3, to=2-3]
	\arrow["{(g^\dagger;g)^D}"', from=2-1, to=2-2]
	\arrow["{g^\dagger}"', from=2-2, to=2-3]
\end{tikzcd} \]
However again by Thm \ref{thm:Drazin=dag-Drazin} we have that $f^\partial = (f^\dagger ;f)^D ;f^\dagger$, $g^\partial = (g^\dagger ;g)^D ;g^\dagger$, ${f^\dagger}^\partial = (f ;f^\dagger)^D ;f$, and ${g^\dagger}^\partial = (g ;g^\dagger)^D ;g$, these two diagrams above are (\textbf{d}) and (\textbf{c}) respectively. 
\end{proof}

\begin{proposition}\label{dag-Drazin-TP} In a $\dagger$KCC, if $\Phi: {[A,A]} \to {[B,B]}$ is $\dagger$-Drazin, and both TP and U, then $\Phi^\partial: {[B,B]} \to {[A,A]}$ is TP and U. 
\end{proposition}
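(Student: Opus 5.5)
We apply Prop \ref{dagger-Drazin-commuting} with $f = \Phi$, $g = 1_I$, and suitable $h,k$. The identity $1_I$ is $\dagger$-Drazin with $1_I^\partial = 1_I$: for instance $1_I;1_I^\dagger = 1_I$ is Drazin with Drazin inverse $1_I$, so by Thm \ref{thm:Drazin=dag-Drazin} we get $1_I^\partial = 1_I^\dagger;(1_I;1_I^\dagger)^D = 1_I$.

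\textbf{TP.} Take $h = \mathsf{tr}^A : [A,A] \to I$ and $k = \mathsf{tr}^B : [B,B] \to I$. Square (\textbf{a}) reads $\Phi;\mathsf{tr}^B = \mathsf{tr}^A$, which is exactly that $\Phi$ is TP. Square (\textbf{b}) reads $\Phi^\dagger;\mathsf{tr}^A = \mathsf{tr}^B$, which says that $\Phi^\dagger$ is TP. By the earlier lemma (TP and U are dagger adjoint), this is equivalent to $\Phi$ being U, so here we use that $\Phi$ is unital. Prop \ref{dagger-Drazin-commuting} then gives square (\textbf{c}): $\Phi^\partial;\mathsf{tr}^A = \mathsf{tr}^B;1_I^\partial = \mathsf{tr}^B$. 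Hence $\Phi^\partial$ is TP.

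\textbf{U.} There are two routes. The first uses (\textbf{d}), which gives $\Phi^{\partial\dagger};\mathsf{tr}^B = \mathsf{tr}^A$, so $\Phi^{\partial\dagger}$ is TP. Taking daggers via the lemma, $\Phi^\partial$ is U: indeed $\mathsf{u}^B;\Phi^\partial = (\Phi^{\partial\dagger};\mathsf{tr}^B)^\dagger = (\mathsf{tr}^A)^\dagger = \mathsf{u}^A$. This already gives both properties from a single application. The second route, as a cross-check, dualizes: apply the proposition to $f = \Phi^\dagger$ with the same $h,k$ (now on swapped objects), or use $\mathsf{u}$ maps directly with $h,k$ in the reverse direction.

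The main obstacle is bookkeeping. We must check that the orientation of $h$ and $k$ in Prop \ref{dagger-Drazin-commuting} matches, with $h$ on the domain side $[A,A]$ and $k$ on the codomain side $[B,B]$. We must also confirm that the two hypotheses (\textbf{a}) and (\textbf{b}) correspond exactly to TP of $\Phi$ and TP of $\Phi^\dagger$, the latter being where unitality is essential. Finally, we need the identity $f^{\partial\dagger} = f^{\dagger\partial}$ so that (\textbf{d}) can be read as TP of $(\Phi^\partial)^\dagger$.
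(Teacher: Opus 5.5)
Your proposal is correct and follows the paper's own proof. You instantiate Prop~\ref{dagger-Drazin-commuting} with $g=1_I$, $h=\mathsf{tr}^A$, $k=\mathsf{tr}^B$, and read squares (\textbf{a}) and (\textbf{b}) as TP of $\Phi$ and TP of $\Phi^\dagger$ (i.e.\ U of $\Phi$), then read (\textbf{c}) and (\textbf{d}) as TP of $\Phi^\partial$ and TP of $\Phi^{\partial\dagger}$ (i.e.\ U of $\Phi^\partial$). Your placement of $h$ and $k$ in square (\textbf{d}) is the right one, and the vague ``second route'' is unnecessary.
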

\begin{proof} Identity maps $1$ are always $\dagger$-Drazin with $1^\partial = 1$ \cite[Lem 2.5]{cockett2025dagger}. Therefore, by setting $g=1_I$, $h=\mathsf{tr}^A$, and $k=\mathsf{tr}^B$ in Prop \ref{dagger-Drazin-commuting}, we get that the two diagrams on the left commute trivially, which imply that the two diagrams on the right commute. 
    \[ \begin{array}[c]{c} \begin{tikzcd}
	{[A,A]} & {[B,B]} & {[B,B]} & {[A,A]} \\
	I & I & I & I
	\arrow["\Phi", from=1-1, to=1-2]
	\arrow["{\mathsf{tr}^A}"', from=1-1, to=2-1]
	\arrow["{\mathsf{tr}^B}", from=1-2, to=2-2]
	\arrow["{{\Phi^\dagger}}", from=1-3, to=1-4]
	\arrow["{\mathsf{tr}^B}"', from=1-3, to=2-3]
	\arrow["{\mathsf{tr}^A}", from=1-4, to=2-4]
	\arrow[from=2-1, to=2-2, equal]
	\arrow[from=2-3, to=2-4, equal]
\end{tikzcd} \end{array} \Longrightarrow \begin{array}[c]{c} \begin{tikzcd}
	{[B,B]} & {[A,A]} & {[A,A]} & {[B,B]} \\
	I & I & I & I
	\arrow["{\Phi^\partial}", from=1-1, to=1-2]
	\arrow["{\mathsf{tr}^B}"', from=1-1, to=2-1]
	\arrow["{\mathsf{tr}^A}", from=1-2, to=2-2]
	\arrow["{\Phi^{\dagger\partial}= \Phi^{\partial\dagger}}", from=1-3, to=1-4]
	\arrow["{\mathsf{tr}^B}"', from=1-3, to=2-3]
	\arrow["{\mathsf{tr}^A}", from=1-4, to=2-4]
	\arrow[from=2-1, to=2-2, equal]
	\arrow[from=2-3, to=2-4, equal]
\end{tikzcd} \end{array} \]
So we conclude that the $\dagger$-Drazin $\Phi^\partial$ is TP and U as desired. 
\end{proof}

We note that the converse of the proposition is not true. The reason is similar as to why for Drazin inverses the converse of Prop \ref{prop:DrazinTP} wasn't necessarily true. Indeed, while if $f$ is a $\dagger$-Drazin, its $\dagger$-Drazin inverse $f^\partial$ is also $\dagger$-Drazin, $f^{\partial \partial}$ is not necessarily $f$, instead its $f^{\partial \partial}= f;f^\partial;f$ \cite[Prop 2.6.(iv)]{cockett2025dagger}. This issue is solved when we instead consider the \textit{Moore-Penrose inverse}.  

\section{Moore-Penrose Inverses}

Arguably, the most well-known example of a generalized inverse is the famous Moore-Penrose inverse. However, as explained in \cite{cao2021nisq}, an issue with the Moore-Penrose inverse is that, unlike the Drazin inverse, the Moore-Penrose inverse of a TP map need not be TP, see \cite[Ex 2]{cao2021nisq}. In this section, we explain how to fix this issue: we need to assume U. In fact, we will show that a map that has a Moore-Penrose inverse is TP and U if and only if its Moore-Penrose inverse is TP and U. For a detailed introduction to Moore-Penrose inverses in dagger categories, we refer the reader to see \cite{Cockett2023Moore}.

\begin{definition} In a dagger category, a \textbf{Moore-Penrose inverse} \cite[Def 2.3]{Cockett2023Moore} of a map $f: A \to B$ is a map of dual type $f^\circ: B \to A$ such that:
\vspace{-0.3cm}
\begin{enumerate}[{\bf [$\text{MP}$.1]}]
\begin{multicols}{4}
    \item $f f^\circ f = f$
\columnbreak
    \item $f^\circ f f^\circ = f^\circ$
\columnbreak
\item $(ff^\circ)^\dagger = ff^\circ$
\columnbreak
    \item $(f^\circ f)^\dagger = f^\circ f$
\end{multicols}
\end{enumerate}
\vspace{-0.3cm}
If $f$ has a Moore-Penrose inverse, we say that $f$ is \textbf{Moore-Penrose} (or \textbf{MP} for short). A \textbf{Moore-Penrose dagger category} is a dagger category such that every map is MP. 
\end{definition}

Moore-Penrose inverses are \emph{unique} \cite[Lem 2.4]{Cockett2023Moore}, so again we may speak of \emph{the} Moore-Penrose inverse of a map. Moreover, if $f$ is MP, then its MP inverse $f^\circ$ is also MP where $f^{\circ\circ} = f$ \cite[Lem 2.5.(i)]{Cockett2023Moore} (which, as we have seen, is not necessarily true for other kinds of generalzied inverses). 


\begin{example} $\mathsf{FHILB}$ is Moore-Penrose, so every $\mathbb{C}$-linear map $f: H \to K$ admits a MP inverse $f^\circ: K \to H$ which can be constructed using singular value decomposition \cite[Ex 2.13]{Cockett2023Moore}. 
\end{example}

\begin{example} $\mathsf{FREL}$ is not Moore-Penrose but we do know which maps are MP. Recall that a relation $R \subseteq X \times Y$ is \textit{difunctional} if $(x,b),(a,b),(a,y) \in R$ implies that $(x,y) \in R$ (which from a dagger category perspective means that $R$ is a partial isometry). Then a relation $R \subseteq X \times Y$ is MP if and only if it difunctional, and in this case the MP inverse is the inverse relation, $R^\circ = R^\dagger$ \cite[Ex 2.14]{Cockett2023Moore}. 
\end{example}

It turns out that an MP inverse is a special case of the $\dagger$-Drazin inverse. In fact, MP maps are precisely the $\dagger$-Drazin maps that are the $\dagger$-Drazin inverse of their $\dagger$-Drazin inverse. 

\begin{theorem}\label{thm:MP=a-dag-Drazin} \cite[Thm 4.2]{cockett2025dagger} In a dagger category, a map $f$ is MP if and only if $f$ is $\dagger$-Drazin and $f^{\partial \partial} = f$. In this case, the MP inverse and $\dagger$-Drazin inverse coincide, $f^\partial = f^\circ$. 
\end{theorem}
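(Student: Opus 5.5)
The plan is to prove both directions directly from the axioms. Uniqueness of $\dagger$-Drazin inverses \cite[Prop 2.2]{cockett2025dagger} and uniqueness of MP inverses \cite[Lem 2.4]{Cockett2023Moore} will do the bookkeeping. The key observation is that \textbf{[D$^\dagger$.2]}--\textbf{[D$^\dagger$.4]} are literally \textbf{[MP.2]}--\textbf{[MP.4]}. So the only real difference between the two notions is \textbf{[D$^\dagger$.1]} versus \textbf{[MP.1]}.

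($\Rightarrow$) Suppose $f$ is MP with inverse $f^\circ$. I will check that $f^\circ$ satisfies \textbf{[D$^\dagger$.1]} with $k=1$; the other three axioms are automatic.
\begin{itemize}
\item By \textbf{[MP.3]} and \textbf{[MP.1]},
\[ f^\dagger;f;f^\circ = f^\dagger;(f;f^\circ)^\dagger = (f;f^\circ;f)^\dagger = f^\dagger, \]
so $(f;f^\dagger);f;f^\circ = f;f^\dagger$.
\item Symmetrically, $f^\circ;f;f^\dagger = (f^\circ;f)^\dagger;f^\dagger = (f;f^\circ;f)^\dagger = f^\dagger$ by \textbf{[MP.4]}, so $f^\circ;f;(f^\dagger;f) = f^\dagger;f$.
\end{itemize}
Hence $f$ is $\dagger$-Drazin, and by uniqueness $f^\partial = f^\circ$. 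Since the MP axioms are symmetric in $f$ and $f^\circ$, the map $f^\circ$ is MP with $f^{\circ\circ}=f$. Applying what we just proved to $f^\circ$ gives $f^{\partial\partial} = (f^\circ)^\partial = f^{\circ\circ} = f$.

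($\Leftarrow$) Suppose $f$ is $\dagger$-Drazin with $f^{\partial\partial}=f$. The identity I want to use is $f^{\partial\partial} = f;f^\partial;f$ \cite[Prop 2.6.(iv)]{cockett2025dagger}. Given it, $f;f^\partial;f = f$, which is \textbf{[MP.1]}. Together with \textbf{[D$^\dagger$.2]}--\textbf{[D$^\dagger$.4]} this shows $f^\partial$ is an MP inverse of $f$, so $f$ is MP and $f^\circ=f^\partial$ by uniqueness.

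The main obstacle is the identity $f^{\partial\partial}=f;f^\partial;f$ if it has to be proved rather than cited. The approach I would take is to verify that $f;f^\partial;f$ satisfies the $\dagger$-Drazin axioms for $f^\partial$.
\begin{itemize}
\item Set $p = f;f^\partial$ and $q = f^\partial;f$. By \textbf{[D$^\dagger$.2]}--\textbf{[D$^\dagger$.4]} these are $\dagger$-idempotents.
\item Using them, check that $f^\partial;(f;f^\partial;f);f^\partial = f^\partial$ and that the composites $f^\partial;f;f^\partial;f$ and $f;f^\partial;f;f^\partial$ are self-adjoint.
\item Check the index condition \textbf{[D$^\dagger$.1]} for $f^\partial$ using the formulas of Thm \ref{thm:Drazin=dag-Drazin}, for example $f^\partial = (f^\dagger;f)^D;f^\dagger$.
\end{itemize}
The index condition is the fiddliest part: it requires manipulating powers of $f^\partial;f^{\partial\dagger}$ with the Drazin identities for $f^\dagger;f$.
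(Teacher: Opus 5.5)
Your proof is correct. There is no in-paper proof to compare it with, because the paper imports this statement verbatim from \cite[Thm 4.2]{cockett2025dagger} and proves nothing itself.

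Your $(\Rightarrow)$ direction is self-contained and complete. The identities $f^\dagger;f;f^\circ = f^\dagger$ and $f^\circ;f;f^\dagger = f^\dagger$ give \textbf{[D$^\dagger$.1]} with $k=1$. Uniqueness then gives $f^\partial=f^\circ$, and the symmetry of the MP axioms gives $f^{\partial\partial}=f^{\circ\circ}=f$.

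Your $(\Leftarrow)$ direction rests on $f^{\partial\partial}=f;f^\partial;f$ from \cite[Prop 2.6.(iv)]{cockett2025dagger}. The paper quotes this same fact after Prop~\ref{dag-Drazin-TP}, so citing it is legitimate.

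The step you flag as the main obstacle, the index condition in your sketch of that identity, is not actually needed. Put $h=f;f^\partial;f$. Using only \textbf{[D$^\dagger$.2]}--\textbf{[D$^\dagger$.4]} for $f$, you get:
\begin{itemize}
\item $f^\partial;h;f^\partial=f^\partial$;
\item $h;f^\partial;h=h$;
\item $f^\partial;h=f^\partial;f$ and $h;f^\partial=f;f^\partial$, both self-adjoint.
\end{itemize}
So $f^\partial$ is MP with $(f^\partial)^\circ=h$. Your own $(\Rightarrow)$ direction, applied to $f^\partial$, then shows $f^\partial$ is $\dagger$-Drazin with $(f^\partial)^\partial=h$, with $k=1$ in \textbf{[D$^\dagger$.1]}. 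No manipulation of powers via Thm~\ref{thm:Drazin=dag-Drazin} is required.

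This also fixes a small mislabel in your sketch. The equation $f^\partial;(f;f^\partial;f);f^\partial=f^\partial$ is \textbf{[MP.1]} for $f^\partial$, not \textbf{[D$^\dagger$.2]}. The latter, for candidate inverse $h$, is $h;f^\partial;h=h$, which also holds trivially.

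With this observation your argument becomes fully self-contained, relying only on the two uniqueness results. It also makes visible why the theorem holds: every $\dagger$-Drazin inverse is itself MP, with MP inverse $f;f^\partial;f$. This parallels $f^{DD}=f;f^D;f$ for ordinary Drazin inverses.
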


With this in hand, we can easily extend Prop \ref{dagger-Drazin-commuting} to MP inverses, and show that MP inverses preserve being both TP and U, as desired. 

\begin{proposition}\label{MP-commuting} Suppose that $f: A \to B$ and $g: C \to D$ are MP. Then, if the two diagrams on the left commute, then the two diagrams on the right commute: 
\vspace{-0.3cm}
\[ \begin{array}[c]{c} \begin{tikzcd}
	A & B & B & A \\
	C & D & D & C
	\arrow["f", from=1-1, to=1-2]
	\arrow["h"', from=1-1, to=2-1]
	\arrow["k", from=1-2, to=2-2]
	\arrow["{f^\dagger}", from=1-3, to=1-4]
	\arrow["k"', from=1-3, to=2-3]
	\arrow["h", from=1-4, to=2-4]
	\arrow["g"', from=2-1, to=2-2]
	\arrow["{g^\dagger}"', from=2-3, to=2-4]
\end{tikzcd} \end{array} \Longrightarrow \begin{array}[c]{c} \begin{tikzcd}
	B & A & A & B \\
	D & C & C & D
	\arrow["{f^\circ}", from=1-1, to=1-2]
	\arrow["k"', from=1-1, to=2-1]
	\arrow["h", from=1-2, to=2-2]
	\arrow["{f^{\circ\dagger}= f^{\dagger\circ}}", from=1-3, to=1-4]
	\arrow["h"', from=1-3, to=2-3]
	\arrow["k", from=1-4, to=2-4]
	\arrow["{g^\circ}"', from=2-1, to=2-2]
	\arrow["{g^{\circ\dagger}= g^{\dagger\circ}}"', from=2-3, to=2-4]
\end{tikzcd} \end{array} \]
\end{proposition}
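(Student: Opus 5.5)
The plan is to reduce the statement to Prop \ref{dagger-Drazin-commuting} using Thm \ref{thm:MP=a-dag-Drazin}.

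First, since $f$ and $g$ are MP, Thm \ref{thm:MP=a-dag-Drazin} says that both are $\dagger$-Drazin. It also gives $f^\partial = f^\circ$ and $g^\partial = g^\circ$. So the hypotheses of Prop \ref{dagger-Drazin-commuting} hold with the same $h$ and $k$, and the two squares on the left are exactly the squares (\textbf{a}) and (\textbf{b}) there. That proposition then yields (\textbf{c}), namely $f^\partial;h = k;g^\partial$, which is the first square on the right once $f^\partial$ is rewritten as $f^\circ$ and $g^\partial$ as $g^\circ$.

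For the second square on the right, Prop \ref{dagger-Drazin-commuting} gives (\textbf{d}) in terms of $f^{\dagger\partial}$ and $g^{\dagger\partial}$. It remains to identify these with $f^{\circ\dagger} = f^{\dagger\circ}$, and similarly for $g$.

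\begin{itemize}
\item \textbf{$f^\dagger$ is MP with $f^{\dagger\circ} = f^{\circ\dagger}$.} Apply the dagger to the four axioms [MP.1]--[MP.4] for $f^\circ$. For instance, [MP.1] gives $f^\dagger f^{\circ\dagger} f^\dagger = f^\dagger$, and [MP.3] becomes [MP.4] for the pair $(f^\dagger, f^{\circ\dagger})$. So $f^{\circ\dagger}$ is a Moore-Penrose inverse of $f^\dagger$, and by uniqueness $f^{\dagger\circ} = f^{\circ\dagger}$.
\item \textbf{$f^{\dagger\partial} = f^{\dagger\circ}$.} Apply Thm \ref{thm:MP=a-dag-Drazin} to the MP map $f^\dagger$.
\end{itemize}

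The same argument applies to $g$. Substituting into (\textbf{d}) gives the second square on the right.

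The main obstacle is this bookkeeping of the identification $f^{\dagger\partial} = f^{\circ\dagger} = f^{\dagger\circ}$ together with the dagger-closure of MP maps. Everything else is a direct instantiation of Prop \ref{dagger-Drazin-commuting}. Alternatively, one could prove the result directly from the MP axioms via $f^\circ = (f^\dagger f)^D f^\dagger$ and Prop \ref{Drazin-commuting}, but going through the $\dagger$-Drazin result is shorter.
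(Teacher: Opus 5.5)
Your proposal is correct and matches the paper's proof: Thm \ref{thm:MP=a-dag-Drazin} makes $f$ and $g$ $\dagger$-Drazin with $f^\partial=f^\circ$ and $g^\partial=g^\circ$, and then Prop \ref{dagger-Drazin-commuting} gives both squares. Your extra check that $f^{\dagger\circ}=f^{\circ\dagger}$ is valid and simply spells out an identification the paper leaves implicit. It could be shortened by noting that square (\textbf{d}) is already stated with $f^{\partial\dagger}$, and $f^{\partial\dagger}=f^{\circ\dagger}$ follows immediately.
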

\begin{proof} Since $f$ and $g$ are MP, they are both $\dagger$-Drazin with $f^\partial= f^\circ$ and $g^\partial=g^\circ$. Then by applying Prop \ref{dagger-Drazin-commuting}, we obtain the desired result. 
\end{proof}

\begin{proposition}\label{MP-TP} In a $\dagger$KCC, if $\Phi: {[A,A]} \to {[B,B]}$ is MP, then $\Phi$ is TP and U if and only if $\Phi^\circ: {[B,B]} \to {[A,A]}$ is TP and U. 
\end{proposition}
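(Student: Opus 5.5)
The plan is to derive both directions from Prop \ref{MP-commuting}, in the same way Prop \ref{dag-Drazin-TP} was derived from Prop \ref{dagger-Drazin-commuting}. The reverse direction will then follow from the involutivity $\Phi^{\circ\circ} = \Phi$ of MP inverses.

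For the $\Rightarrow$ direction, assume $\Phi$ is TP and U. We apply Prop \ref{MP-commuting} with $f = \Phi$, $g = 1_I$, $h = \mathsf{tr}^A$ and $k = \mathsf{tr}^B$; here $1_I$ is MP with $1_I^\circ = 1_I$, since it satisfies [MP.1]--[MP.4] trivially. The left square (a) reads $\Phi;\mathsf{tr}^B = \mathsf{tr}^A$, which is exactly TP. The left square (b) reads $\Phi^\dagger;\mathsf{tr}^A = \mathsf{tr}^B$, i.e.\ $\Phi^\dagger$ is TP. By the earlier Lemma (TP and U are dagger adjoint), this is equivalent to $\Phi$ being U.

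The conclusion (c) then gives $\Phi^\circ;\mathsf{tr}^A = \mathsf{tr}^B$, so $\Phi^\circ$ is TP. The conclusion (d) gives $\Phi^{\circ\dagger};\mathsf{tr}^B = \mathsf{tr}^A$, so $(\Phi^\circ)^\dagger$ is TP. Applying the Lemma once more, $\Phi^\circ$ is U. Alternatively, Theorem \ref{thm:MP=a-dag-Drazin} gives $\Phi^\circ = \Phi^\partial$, and the direction follows directly from Prop \ref{dag-Drazin-TP}.

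For the $\Leftarrow$ direction, recall that $\Phi^\circ$ is itself MP with $\Phi^{\circ\circ} = \Phi$. If $\Phi^\circ$ is TP and U, applying the $\Rightarrow$ direction to $\Phi^\circ : [B,B] \to [A,A]$ shows that $\Phi^{\circ\circ} = \Phi$ is TP and U.

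There is no serious obstacle here. The one point to check carefully is the translation of U into a TP condition on the dagger, which uses $(\mathsf{u}^X)^\dagger = \mathsf{tr}^X$. This is needed because Prop \ref{MP-commuting} is stated with $\mathsf{tr}$ on both the vertical legs and the dagger square, so the second square must be read as ``$\Phi^\dagger$ is TP'' rather than directly as unitality.
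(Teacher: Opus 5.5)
Your proof is correct and is essentially the paper's. The paper proves $\Rightarrow$ by noting $\Phi^\circ = \Phi^\partial$ and invoking Prop~\ref{dag-Drazin-TP}, which is your stated alternative and amounts to the same instantiation $g = 1_I$, $h = \mathsf{tr}^A$, $k = \mathsf{tr}^B$ that you carry out directly through Prop~\ref{MP-commuting}; it proves $\Leftarrow$ exactly as you do, using $\Phi^{\circ\circ} = \Phi$ (and your explicit reading of the second square as unitality via $(\mathsf{u}^X)^\dagger = \mathsf{tr}^X$ spells out a step the paper's proof of Prop~\ref{dag-Drazin-TP} passes over as ``commute trivially'').
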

\begin{proof} For the $\Rightarrow$ direction, since $\Phi$ is MP, it is also $\dagger$-Drazin and $\Phi^\partial=\Phi^\circ$. By applying Prop \ref{dag-Drazin-TP} we get that $\Phi^\circ$ is TP and U. For the $\Leftarrow$ direction, since $\Phi$ is MP, $\Phi^{\circ}$ is also MP and $\Phi^{\circ\circ} = \Phi$. So by applying the $\Rightarrow$ direction that we just proved with $\Phi^\circ$, we get $\Phi$ is TP and U as desired. 
\end{proof}

Therefore this tells us that for unital quantum channels, the MP inverse is both TP and unital. 

\section{(U)CPTP Maps whose Generalized Inverses are (U)CPTP}

Now, as we have established, while generalized inverses preserve being TP and/or U, they often do not preserve being CP. Thus, a natural question to ask is whether there are any quantum channels whose generalized inverse are again quantum channels. The goal of this section is to provide some examples of (U)CPTP maps that are again (U)CPTP. Our first source of examples are the so called pure channels

\begin{definition} In a $\dagger$KCC, a map $\Phi: [A,A] \to [B,B]$ is \textbf{pure} if there exists a map $f: A \to B$ such that $\Phi = [f^\dagger,f]$. 
\end{definition}

The properties of $f$ determine the properties of the induced map $[f^\dagger,f]$. Recall that in a dagger category, a map $f: A \to B$ is an \textbf{isometry} if $f;f^\dagger = 1_A$, a \textbf{coisometry} if $f;f^\dagger= 1_B$, and \textbf{unitary} if it both an isometry and a coisometry, which particular means it is an isomorphism whose inverse is its adjoint, $f^{-1} = f^\dagger$. Similarly, we say that map $\Phi: [A,A] \to [B,B]$ is \textbf{pure isometry (resp. unitary)} if there exists a isometry (resp. unitary) map $f: A \to B$ such that $\Phi = [f^\dagger,f]$. 

\begin{lemma} In a $\dagger$KCC, for any map $f: A \to B$, 
\vspace{-0.3cm}
   \begin{enumerate}[(i)]
   \begin{multicols}{2}
\item $[f^\dagger,f]$ is CP \cite[Lem 4.17.(d)]{selinger2007dagger};
\item  $[f^\dagger,f]$ is TP if and only if $f$ is an isometry;
\columnbreak
\item $[f^\dagger,f]$ is U if and only if $f$ is a coisometry;
\item $[f^\dagger,f]$ is UCPTP if and only if $f$ is unitary. 
\end{multicols}
\end{enumerate}
\vspace{-0.3cm}
Therefore, a pure isometry map is CPTP and a pure unitary map is UCPTP. 
\end{lemma}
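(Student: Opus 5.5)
Part (i) is already cited from \cite[Lem 4.17.(d)]{selinger2007dagger}, so the work is in (ii) and (iii); (iv) then follows by combining them, and the closing sentence is immediate from the definitions.

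For (ii), the plan is to compute $[f^\dagger,f];\mathsf{tr}^B$ in terms of $\mathsf{tr}^A$. Internally, $[f^\dagger,f] = f_\ast \otimes f$ and $\mathsf{tr}^B = \cup_B$. Using the standard sliding property of cups in a compact closed category, $(f_\ast\otimes 1_B);\cup_B = (1_{A^\ast}\otimes f^\dagger);\cup_A$ up to the conventions relating $f_\ast$ to the transpose of $f^\dagger$. Hence
\[ [f^\dagger,f];\mathsf{tr}^B = (1_{A^\ast}\otimes f;f^\dagger);\cup_A = [1_A, f;f^\dagger];\mathsf{tr}^A . \]
Read more conceptually, this is the cyclicity (extranaturality) of the trace: $\mathsf{Tr}(f^\dagger;\rho;f) = \mathsf{Tr}(\rho;f;f^\dagger)$ in the paper's diagrammatic order. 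So if $f;f^\dagger = 1_A$, then $[f^\dagger,f]$ is TP.

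For the converse, TP gives $[1_A,f;f^\dagger];\cup_A = \cup_A = [1_A,1_A];\cup_A$. The map $g \mapsto [1_A,g];\cup_A$ is exactly the compact closed bijection $\mathbb{X}(A,A)\cong\mathbb{X}(A^\ast\otimes A,I)$, obtained by bending a wire with the snake equations. This transposition is injective, so $f;f^\dagger = 1_A$.

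For (iii), I would dualize rather than recompute. The pure map $[f^\dagger,f]$ is $f_\ast\otimes f$, and its adjoint is $f^\ast \otimes f^\dagger = [f,f^\dagger] = [(f^\dagger)^\dagger, f^\dagger]$, which is again pure, now for $f^\dagger$. By the lemma that $\Phi$ is U iff $\Phi^\dagger$ is TP, $[f^\dagger,f]$ is U iff $[f, f^\dagger]$ is TP. By (ii) applied to $f^\dagger$, that holds iff $f^\dagger;f = 1_B$, that is, iff $f$ is a coisometry. Alternatively, one can compute directly: $\mathsf{u}^A;[f^\dagger,f]$ picks out $f^\dagger;1_A;f$, and conclude with the same injectivity of names for maps $I\to[B,B]$.

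The only delicate step is the sliding/cyclicity identity in (ii), in particular tracking the twist conventions between $f^\ast$, $f_\ast$ and $\cup$ versus $\cap$. I would verify it carefully once, using $f_\ast = (f^\dagger)^\ast$ together with the definition of $f^\ast$ via caps and cups and the snake equations. Everything else is bookkeeping.
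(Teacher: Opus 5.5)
Your proof is correct and follows the paper's route for (ii). The paper also gets $[f^\dagger,f];\mathsf{tr}^B = [1_A,f;f^\dagger];\mathsf{tr}^A$ from extranaturality (sliding) of the cup, and your explicit use of injectivity of $g \mapsto [1_A,g];\mathsf{tr}^A$ supplies the ``only if'' step that the paper leaves implicit.

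For (iii) your main argument differs slightly: you dualize via $[f^\dagger,f]^\dagger = [f,f^\dagger]$ and the earlier TP/U adjoint lemma, whereas the paper slides $\mathsf{u}$ directly. Your stated alternative is exactly the paper's argument, so the difference is cosmetic.
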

\begin{proof} (i) is immediate by construction. For (ii)-(iv), these follow from the fact that since $\mathsf{tr}^X$ and $\mathsf{u}^X$ are extranatural, we have that $[f^\dagger;f];\mathsf{tr}^B= [1_A, f;f^\dagger];\mathsf{tr}^A = [f^\dagger;f, 1_B];\mathsf{tr}^B$ and $\mathsf{u}^A;[f^\dagger;f]= \mathsf{u}^A;[1_A, f;f^\dagger] = \mathsf{u}^B;[f^\dagger;f, 1_B]$. 
\end{proof}

\begin{lemma} In a $\dagger$KCC, if a map $f$ is Drazin or $\dagger$-Drazin or MP, then $[f^\dagger, f]$ is Drazin where $[f^\dagger, f]^D = [{f^D}^\dagger, f^D]$ or $\dagger$-Drazin where $[f^\dagger, f]^\partial = [{f^\partial}^\dagger, f^\partial]$ or MP where $[f^\dagger, f]^\circ= [{f^\circ}^\dagger, f^\circ]$ respectively. In particular, if $f$ is unitary, then $[f^\dagger, f]$ is Drazin, $\dagger$-Drazin, and MP where $[f^\dagger, f]^D = [f^\dagger, f]^\partial = [f^\dagger, f]^\circ= [f, f^\dagger]$. 
\end{lemma}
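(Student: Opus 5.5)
The key observation is that the assignment $f \mapsto [f^\dagger, f]$ behaves like a functor that commutes with the dagger, so each of the defining equations for the three generalized inverses transports directly. I will first record three structural facts about $[f^\dagger,f] = f_\ast \otimes f$.
\begin{enumerate}[(a)]
\item \emph{Composition.} For $f: A \to B$ and $g: B \to C$ we have $[f^\dagger,f];[g^\dagger,g] = [(f;g)^\dagger, f;g]$. This holds because $(-)_\ast$ is covariant and $\otimes$ is a functor, so $(f_\ast \otimes f);(g_\ast \otimes g) = (f;g)_\ast \otimes (f;g)$. Together with $[1^\dagger, 1] = 1$, this makes $F(f) := [f^\dagger,f]$ a functor.
\item \emph{Dagger.} We have $[f^\dagger,f]^\dagger = [f, f^\dagger]$. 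This follows from $(f_\ast \otimes f)^\dagger = f_\ast^\dagger \otimes f^\dagger = f^\ast \otimes f^\dagger$, using that $\otimes$ is a dagger functor and $f_\ast^\dagger = f^\ast$. Hence $F(f)^\dagger = F(f^\dagger)$, so $F$ is a dagger functor.
\item \emph{Dependence on $f$ alone.} The map $F(f)$ is determined by $f$, so any equation between composites of $f$, $f^\dagger$ and a candidate inverse $f'$ is sent to the corresponding equation between $F(f)$, $F(f)^\dagger$ and $F(f')$.
\end{enumerate}

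With these facts, each case reduces to applying $F$ to the defining axioms.

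\emph{Drazin.} Suppose $f$ is Drazin, and apply $F$ to [D.1]--[D.3]. For instance, $f^{k+1};f^D = f^k$ gives $F(f)^{k+1};F(f^D) = F(f)^k$, and similarly $F(f)$ commutes with $F(f^D)$ and $F(f^D);F(f);F(f^D) = F(f^D)$. So $F(f^D)$ is a Drazin inverse of $F(f)$. By uniqueness \cite[Prop 2.3]{cockett2024drazin}, $[f^\dagger,f]^D = [{f^D}^\dagger, f^D]$.

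\emph{$\dagger$-Drazin.} The axioms [D$^\dagger$.1]--[D$^\dagger$.4] also involve $f^\dagger$ and daggers of composites. Because $F$ preserves both composition and dagger, these axioms transport as well. For example, $(F(f);F(f)^\dagger)^k;F(f);F(f^\partial) = F((f;f^\dagger)^k;f;f^\partial) = F((f;f^\dagger)^k) = (F(f);F(f)^\dagger)^k$. Similarly, $(F(f);F(f^\partial))^\dagger = F((f;f^\partial)^\dagger) = F(f;f^\partial) = F(f);F(f^\partial)$. Uniqueness \cite[Prop 2.2]{cockett2025dagger} then gives $[f^\dagger,f]^\partial = [{f^\partial}^\dagger, f^\partial]$.

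\emph{MP.} The same argument applied to [MP.1]--[MP.4], together with uniqueness \cite[Lem 2.4]{Cockett2023Moore}, gives $[f^\dagger,f]^\circ = [{f^\circ}^\dagger, f^\circ]$. Alternatively, this case follows from the $\dagger$-Drazin one via Thm \ref{thm:MP=a-dag-Drazin}, since $F(f^{\partial\partial}) = F(f)$.

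\emph{Unitary case.} If $f$ is unitary, it is an isomorphism with $f^{-1} = f^\dagger$. Its Drazin, $\dagger$-Drazin and MP inverses are all $f^{-1} = f^\dagger$. For Drazin this is \cite[Lem 3.6]{cockett2024drazin}. For MP it follows by directly checking the four axioms with $f^\circ = f^\dagger$. For $\dagger$-Drazin it then follows from Thm \ref{thm:MP=a-dag-Drazin}. Substituting into the formulas above gives $[f^{\dagger\dagger}, f^\dagger] = [f, f^\dagger]$ in all three cases.

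\emph{Main obstacle.} The only real work is verifying facts (a) and (b), that is, that $F$ is a dagger functor. This requires care with the conventions for $(-)_\ast$ versus $(-)^\ast$ and with variance. I would check it using the identity $[f,g] = f^\ast \otimes g$ and the functoriality of $[-,-]$, which is contravariant in the first argument. Concretely, $[f^\dagger,f];[g^\dagger,g] = [g^\dagger;f^\dagger, f;g]$, and $g^\dagger;f^\dagger = (f;g)^\dagger$. For the dagger, I would use $(f^\ast)^\dagger = f_\ast = (f^\dagger)^\ast$.
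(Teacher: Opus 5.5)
Your proposal is correct and is essentially the paper's argument. The paper observes that $f \mapsto [f^\dagger,f]$ is a dagger functor, and that Drazin inverses are preserved by any functor while $\dagger$-Drazin and MP inverses are preserved by any dagger functor. You spell out what the paper leaves implicit: the functoriality and dagger checks (a)--(b), the transport of the axioms, the appeal to uniqueness, and the unitary case.
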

\begin{proof} This easily follows from the fact that Drazin inverses are absolute, meaning any functors preserves Drazin inverses. It is easy to see that $\dagger$-Drazin inverses or MP inverses are dagger absolute, that is, any dagger functor preserves $\dagger$-Drazin inverses or MP inverses.
\end{proof}

A well-known result about quantum channels is that they admit a Kraus representation \cite[Cor 2.27]{watrous2018theory}, meaning that they are all sums of pure isometric channels. To consider these in our setting, we will have to add additive structure to our $\dagger$KCC. So for a category with biproducts \cite[Sec 6.3]{selinger2010survey}, we denote the zero object by $\mathsf{0}$, and the biproduct of a family of objects by $A_1 \oplus \hdots A_n$ with projection maps $\pi_j: A_1 \oplus \hdots A_n \to A_j$ and injection maps $\iota_j: A_j \to A_1 \oplus \hdots A_n$. Recall that every category with biproducts is also enriched over commutative monoids, which essentially means we can add parallel maps $f+g$ and have zero maps $0$, such that composition is a monoid morphism. Now a dagger category has dagger biproducts \cite[Sec 7.6]{selinger2010survey} if it has biproducts such that the projection and injections are adjoints of each other, so $\pi_i^\dagger=\iota_i$ and $\iota^\dagger_i = \pi_i$. In this setting, the dagger is also a monoid morphism. Then by a \textbf{dagger additive compact closed category} (or \textbf{$\dagger$AKCC} for short), we mean a $\dagger$KCC with dagger biproducts. It is worth noting that in this setting, it comes for free that monoidal product preserves both the sum and the biproduct in the expected way. 



\begin{definition} In a $\dagger$AKCC, a map $\Phi: [A,A] \to [B,B]$ is said to be \textbf{mixed pure} if there exits a family of map $\lbrace f_i: A \to B \rbrace^n_{i=1}$ such that $\Phi = \sum\limits_{i=1}^n [f_i^\dagger,f_i]$. 
\end{definition}

\begin{lemma} In a $\dagger$AKCC, let $\lbrace f_i: A \to B \rbrace^n_{i=1}$ a family of maps. Then: 
\vspace{-0.3cm}
   \begin{enumerate}[(i)]
    \begin{multicols}{2}
\item $\sum\limits_{i=1}^n [f_i^\dagger,f_i]$ is CP;
\item $\sum\limits_{i=1}^n [f_i^\dagger,f_i]$ is TP if and only if $\sum\limits_{i=1}^n f_i;f_i^\dagger = 1_A$;
\columnbreak
\item $\sum\limits_{i=1}^n [f_i^\dagger,f_i]$ is U if and only if $\sum\limits_{i=1}^n f_i^\dagger;f_i = 1_B$. 
\end{multicols}
\end{enumerate}
\vspace{-0.3cm}
\end{lemma}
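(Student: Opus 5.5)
For (i), I would show that CP maps are closed under finite sums by assembling the Kraus-type maps into one map into a biproduct. Each $[f_i^\dagger,f_i]$ is CP by the previous lemma, but the cleanest route is explicit. Define
\[ f:=\sum_{i=1}^n f_i;(\iota_i\otimes 1_B) : A \to X\otimes B, \qquad X := I\oplus\cdots\oplus I \ (n\text{ copies}), \]
where $\iota_i: I\to X$ are the injections. I then want to verify that
\[ [f^\dagger,f];\mathsf{tr}^X_{B,B}=\sum_i [f_i^\dagger,f_i]. \]
Since $[f^\dagger,f]=f_\ast\otimes f$ and $\otimes$, $(\_)_\ast$ and composition are all bilinear, this expands to a double sum over $(i,j)$. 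After applying the partial trace, each cross term contains the factor $\cup_X$ precomposed with $(\iota_i)_\ast\otimes\iota_j$. By extranaturality of the cup, this equals $\cup_I$ precomposed with $(\iota_i;\pi_j)_\ast$ or something equivalent, and it is therefore $\delta_{ij}$, using $\iota_i^\dagger=\pi_i$ and $\iota_i;\pi_j=\delta_{ij}$. The diagonal terms reduce to $[f_i^\dagger,f_i]$ because $\mathsf{tr}^I$ is trivial. This index bookkeeping, that is, checking that the twists and the ordering of the partial trace in $\mathsf{tr}^X_{B,B}$ line up with the expansion, is the main obstacle. If it becomes messy, I would fall back on showing directly that CP maps are closed under $+$, using the same biproduct trick applied to witnesses $g_i:A\to X_i\otimes B$ with $X=\bigoplus_i X_i$.

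For (ii) and (iii), I would use additivity of composition and the computation from the previous lemma. By extranaturality of $\mathsf{tr}$ we have $[f_i^\dagger,f_i];\mathsf{tr}^B=[1_A,f_i;f_i^\dagger];\mathsf{tr}^A$. Summing over $i$, and using that $[1,-]=1\otimes -$ is additive, gives
\[ \Big(\sum_i[f_i^\dagger,f_i]\Big);\mathsf{tr}^B=[1_A,\textstyle\sum_i f_i;f_i^\dagger];\mathsf{tr}^A. \]
So if $\sum_i f_i;f_i^\dagger=1_A$, the sum is TP. Dually, for U I would use $\mathsf{u}^A;[f_i^\dagger,f_i]=\mathsf{u}^B;[\sum_i f_i^\dagger;f_i,1_B]$ after summing, and conclude in the same way.

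For the converse directions, I would mirror whatever the previous lemma used. Writing $g=\sum_i f_i;f_i^\dagger$, the condition $[1_A,g];\mathsf{tr}^A=\mathsf{tr}^A$ says that the name-transposes of $g$ and $1_A$ agree as maps $A^\ast\otimes A\to I$. Since transposition via cups and caps is a bijection $\mathbb{X}(A,A)\cong\mathbb{X}(A^\ast\otimes A,I)$, this forces $g=1_A$. The U case is analogous, using caps to get $\mathbb{X}(B,B)\cong\mathbb{X}(I,B^\ast\otimes B)$.
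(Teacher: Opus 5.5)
Your proposal is correct and takes essentially the paper's route. For (i) the paper simply cites that sums of CP maps are CP (Selinger's Lem 5.2), and your biproduct construction with $X = I\oplus\cdots\oplus I$ is exactly the standard proof of that fact; for (ii)--(iii) you use the same extranaturality identities $[f_i^\dagger,f_i];\mathsf{tr}^B=[1_A,f_i;f_i^\dagger];\mathsf{tr}^A$ and $\mathsf{u}^A;[f_i^\dagger,f_i]=\mathsf{u}^B;[f_i^\dagger;f_i,1_B]$ as the preceding lemma, and your injectivity argument for the converse directions (via the bijection $g\mapsto(1_{A^\ast}\otimes g);\cup_A$ and its cap analogue) correctly supplies the detail the paper leaves as ``easy to check''.
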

\begin{proof} For (i), this follows from the fact that sum of CP maps is again a CP map \cite[Lem 5.2]{selinger2007dagger}. The other two statements are easy to check.  
\end{proof}

Now as with the usual inverse, computing the generalized inverse of a sum can be quite complicated \cite{cline1965representations}, and in general the generalized inverse of the sum need not be the sum of the generalized inverses. However, if we assume minimal orthogonal condition on our family of maps, then we can show that in this case, the generalized inverse of their sum is the sum of their generalized inverse. Thus we get the following result in a commutative monoid enriched category. 

\begin{lemma}\label{Drazin-sum} In a category enriched over commutative monoids, let $\lbrace f_i: A \to A \rbrace^n_{i=1}$ be a family of Drazin endomaps such that $f_i f_j=0$ for all $i \neq j$. Then $\sum \limits^n_{i=1} f_i$ is Drazin where $\left(\sum \limits^n_{i=1} f_i \right)^D = \sum \limits^n_{i=1} f_i^D$.
\end{lemma}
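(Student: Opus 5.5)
Set $F = \sum_i f_i$ and $G = \sum_i f_i^D$. The plan is to verify the three axioms of Def \ref{def:Drazin} directly for the pair $(F,G)$, using only bilinearity of composition (composition is a monoid morphism in each variable) together with the orthogonality hypothesis $f_i f_j = 0$ for $i \neq j$. Uniqueness of Drazin inverses then gives $F^D = G$.

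The key preliminary step is to transfer orthogonality to the Drazin inverses. Recall that $f^D = (f^D)^{k+1} f^k$ for $k \ge \mathsf{ind}^D(f)$, which follows from [D.2] and [D.3] by iterating $f^D = f^D f f^D = (f^D)^2 f = \cdots$. We may take $k \geq 1$, and we also have $f^D = f^k (f^D)^{k+1}$ by commutation. Hence, for $i \neq j$:
\begin{itemize}
\item $f_i^D f_j = (f_i^D)^{k+1} f_i^{k} f_j = 0$, since $f_i f_j = 0$ and $k \geq 1$;
\item $f_j f_i^D = f_j f_i^k (f_i^D)^{k+1} = 0$;
\item $f_i^D f_j^D = f_i^D f_j f_j^D (f_j^D) \cdots$, i.e.\ write $f_j^D = f_j (f_j^D)^2$ and use the first bullet, giving $0$.
\end{itemize}
So all mixed products among $\{f_i, f_i^D\}$ with distinct indices vanish. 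We need to allow $k=0$ for some $f_i$ only in the trivial sense, so we simply choose a common $k = \max(1, \max_i \mathsf{ind}^D(f_i))$. This works because [D.1] holding for $k$ implies it holds for every larger $k$.

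With this in hand, every product of sums collapses to a sum of "diagonal" terms. Concretely:
\begin{itemize}
\item $F^{m} = \sum_i f_i^m$ for $m \geq 1$;
\item $FG = \sum_i f_i f_i^D = \sum_i f_i^D f_i = GF$, which gives [D.3];
\item $GFG = \sum_i f_i^D f_i f_i^D = \sum_i f_i^D = G$, which gives [D.2];
\item $F^{k+1} G = \sum_i f_i^{k+1} f_i^D = \sum_i f_i^k = F^k$, which gives [D.1] with the common $k \geq 1$ chosen above.
\end{itemize}
Expanding a product of sums uses distributivity on both sides, which the enrichment provides.

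The only real obstacle is the orthogonality transfer to the $f_i^D$, together with the care about using a common index $k \geq 1$. One must ensure no term like $f_i^D f_j$ survives when expanding $F^{k+1}G$ or $GFG$, and this is exactly what the displayed identities $f^D = (f^D)^{k+1}f^k = f^k (f^D)^{k+1}$ deliver.
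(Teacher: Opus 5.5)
Your proof is correct, and its skeleton matches the paper's. Both expand the products of sums, kill the cross terms, and check \textbf{[D.1]}--\textbf{[D.3]} for $\sum_i f_i^D$ directly. Both also take a common index $\geq 1$ so that powers of the sum are sums of powers. The paper does this by verifying \textbf{[D.1]} at $k+1$ with $k=\max_i \mathsf{ind}^D(f_i)$, since $(\sum_i f_i)^0 = 1$ need not equal $\sum_i f_i^0$.

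The one genuine difference is how you transfer orthogonality to the Drazin inverses.
The paper obtains $f_i^D f_j = 0 = f_j f_i^D$ from Prop \ref{Drazin-commuting} together with $0^D = 0$, e.g.\ applied to the square $f_i;f_j = f_j;0$. You instead derive it from the identity $f^D = (f^D)^2 f = f (f^D)^2$, which follows from \textbf{[D.2]} and \textbf{[D.3]} alone.

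Your route is more elementary and self-contained. In fact $f^D = (f^D)^{m+1} f^m$ holds for every $m \geq 0$ with no reference to the index, so your restriction to $k \geq \mathsf{ind}^D(f)$ is unnecessary and $m=1$ already suffices. The paper's route reuses the intertwining property that drives its other results, and it applies to arbitrary intertwiners rather than just $0$.

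Your third bullet, $f_i^D f_j^D = 0$, is true but is never used in the expansions that follow.
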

\begin{proof} We need to show that $\sum \limits^n_{i=1} f_i^D$ satisfies \textbf{[D.1]-[D.3]}. Let us start with \textbf{[D.3]}. Since $f_i;f_j = 0 = f_j;f_i$, by Prop \ref{Drazin-commuting} and that $0$ is Drazin with $0^D=0$ \cite[Lem 6.10]{cockett2024drazin}, it follows that $f_i^D;f_j = 0 = f_j^D;f_i$. So we get that $\left(\sum \limits^n_{i=1} f_i \right) ; \left(\sum \limits^n_{i=1} f_i^D \right) = \sum \limits^n_{i=1} f_i;f_i^D = \sum \limits^n_{i=1} f_i^D ;f_i = \left(\sum \limits^n_{i=1} f_i \right) ; \left(\sum \limits^n_{i=1} f_i^D \right)$. So \textbf{[D.3]} holds. Moreover, note that $f_j; f_i; f_i^D =0$, then it follows that $\left(\sum \limits^n_{i=1} f_i^D \right);\left(\sum \limits^n_{i=1} f_i \right) ; \left(\sum \limits^n_{i=1} f_i^D \right) = \sum \limits^n_{i=1} f_i^D ;f_i;f_i^D =\sum \limits^n_{i=1} f_i^D$. So \textbf{[D.2]} holds. Now for \textbf{[D.1]}, first note that in our case we have that $\left(\sum \limits^n_{i=1} f_i \right)^{k+1} = \sum\limits^{n}_{i=1} f_i^{k+1}$. Then set $k=\max\lbrace \mathsf{ind}^D(f_i) \rbrace$. In particular this means that $f_i^{k+1} f^D_i = f^{k}_i$ \cite[Lem 2.3.(i)]{cockett2024drazin}. Therefore, since $f_i^{k+1};f_j^D =0$, we have that $\left(\sum \limits^n_{i=1} f_i \right)^{k+2}; \left(\sum \limits^n_{i=1} f_i^D \right) = \sum \limits_{i=1}^n f^{k+2}_i;f_i^D = \sum \limits_{i=1}^n f^{k+1}_i =\left(\sum \limits^n_{i=1} f_i \right)^{k+1}$. So \textbf{[D.1]} holds. Therefore $\left(\sum \limits^n_{i=1} f_i \right)^D = \sum \limits^n_{i=1} f_i^D$ is the Drazin inverse of $\sum \limits^n_{i=1} f_i$ as desired. 
\end{proof}

With this in hand, we can build a (U)CPTP map whose Drazin inverse is also (U)CPTP. 

\begin{corollary}\label{cor:draz-phi-sum} In a $\dagger$AKCC, let $\lbrace f_i: A \to A \rbrace^n_{i=1}$ be a family of Drazin endomaps such that $f_i f_j=0$ for all $i \neq j$, and let $\Phi = \sum\limits_{i=1}^n [f_i^\dagger,f_i]$. Then $\Phi$ is CP and Drazin with $\Phi^D = \sum\limits_{i=1}^n [f_i^{D\dagger},f_i]$, and thus $\Phi^D$ is also CP. Furthermore, if $\sum\limits_{i=1}^n f_i;f_i^\dagger = 1_A$ (resp. $\sum\limits_{i=1}^n f_i^\dagger;f_i = 1_A$), then both $\Phi$ and $\Phi^D$ are TP (resp. U). 
\end{corollary}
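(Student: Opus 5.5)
The plan is to reduce everything to Lemma~\ref{Drazin-sum} applied to the family of pure maps $\Phi_i := [f_i^\dagger, f_i]$, and then to read off CP, TP and U from results already established. I read the displayed formula for $\Phi^D$ as $\sum_{i=1}^n [{f_i^D}^\dagger, f_i^D]$, which is what the pure-map lemma produces for each summand.

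\textbf{Step 1: the summands and their Drazin inverses.} By the lemma on pure maps, since each $f_i$ is Drazin, each $\Phi_i = [f_i^\dagger,f_i]$ is Drazin with $\Phi_i^D = [{f_i^D}^\dagger, f_i^D]$. This follows from absoluteness of Drazin inverses, applied to the functor $f \mapsto [f^\dagger, f] = f_\ast \otimes f$.

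\textbf{Step 2: orthogonality of the summands.} We need $\Phi_i;\Phi_j = 0$ for $i \neq j$. By functoriality of $[-,-]$ (contravariant in the first slot, covariant in the second) we get
\[
[f_i^\dagger,f_i];[f_j^\dagger,f_j] = [f_j^\dagger;f_i^\dagger,\; f_i;f_j] = [(f_i;f_j)^\dagger,\; f_i;f_j] = [0^\dagger,0].
\]
It then remains to check that $[0,0] = 0_\ast \otimes 0 = 0$. This holds because the dagger and $(-)^\ast$ preserve zero maps, and $\otimes$ preserves sums and zero in a $\dagger$AKCC, as noted when $\dagger$AKCCs were introduced. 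I expect this verification that $\otimes$ is zero-preserving to be the only mildly delicate point, so I would state it explicitly. With orthogonality in hand, Lemma~\ref{Drazin-sum} gives that $\Phi = \sum_i \Phi_i$ is Drazin with $\Phi^D = \sum_i [{f_i^D}^\dagger, f_i^D]$.

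\textbf{Step 3: CP, TP and U.} Now $\Phi^D$ is itself a mixed pure map, built from the family $\{f_i^D\}$, so it is CP by part (i) of the mixed-pure lemma; $\Phi$ is CP for the same reason. If $\sum_i f_i;f_i^\dagger = 1_A$, then $\Phi$ is TP by part (ii) of that lemma, and $\Phi^D$ is TP by Prop~\ref{prop:DrazinTP}. Likewise, if $\sum_i f_i^\dagger;f_i = 1_A$, then $\Phi$ is U by part (iii), and $\Phi^D$ is U by Lemma~\ref{lemma:DrazinU}. This completes the corollary.
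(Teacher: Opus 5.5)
Your proposal is correct and is essentially the argument the paper intends. The paper gives no separate proof; it presents the corollary as a direct consequence of Lemma~\ref{Drazin-sum} applied to the pure summands $[f_i^\dagger,f_i]$, together with the pure-map lemma, the mixed-pure lemma, Prop~\ref{prop:DrazinTP} and Lemma~\ref{lemma:DrazinU}. Your explicit orthogonality check $[f_i^\dagger,f_i];[f_j^\dagger,f_j]=[(f_i;f_j)^\dagger,f_i;f_j]=0$ supplies a step the paper leaves unstated. Your reading $\Phi^D=\sum_i[{f_i^D}^\dagger,f_i^D]$ is also right: the printed $[f_i^{D\dagger},f_i]$ is a typo.
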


Let us now extend Lem \ref{Drazin-sum} to $\dagger$-Drazin inverses and to MP inverse. 

\begin{lemma}\label{lem:dag-draz-sum} In a dagger category enriched over commutative monoids, let $\lbrace f_i: A \to B \rbrace^n_{i=1}$ be a family of $\dagger$-Drazin maps such that $f_i; f^\dagger_j=0$ for all $i \neq j$. Then $\sum \limits^n_{i=1} f_i$ is $\dagger$-Drazin where $\left(\sum \limits^n_{i=1} f_i \right)^\partial = \sum \limits^n_{i=1} f_i^\partial$.
\end{lemma}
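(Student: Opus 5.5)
The plan is to check directly that $G := \sum_i f_i^\partial$ satisfies \textbf{[D$^\dagger$.1]}--\textbf{[D$^\dagger$.4]} for $F := \sum_i f_i$, following the proof of Lemma \ref{Drazin-sum}. Throughout I use the formulas of Thm \ref{thm:Drazin=dag-Drazin}:
\[ f_i^\partial = f_i^\dagger;(f_i;f_i^\dagger)^D = (f_i^\dagger;f_i)^D;f_i^\dagger . \]

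\textbf{Step 1: the cross terms that vanish for free.} Since the hypothesis is imposed for every ordered pair $i\neq j$, dagger-ing it adds nothing new. It does, however, give the following directly.
\begin{itemize}
\item $F;F^\dagger = \sum_i f_i;f_i^\dagger$.
\item $f_j;f_i^\partial = f_j;f_i^\dagger;(f_i;f_i^\dagger)^D = 0$ for $i \neq j$.
\item The endomaps $h_i := f_i^\dagger;f_i$ on $B$ are pairwise orthogonal, because $h_i;h_j = f_i^\dagger;(f_i;f_j^\dagger);f_j = 0$. So, exactly as in the proof of Lemma \ref{Drazin-sum} (Prop \ref{Drazin-commuting} with $k=h_j$ and $0^D=0$), we get $h_i^D;h_j = 0 = h_j;h_i^D$.
\end{itemize}
From the second point, $F;G = \sum_i f_i;f_i^\partial$. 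This is a sum of self-adjoint maps, and the dagger is a monoid morphism, so \textbf{[D$^\dagger$.3]} holds.

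\textbf{Step 2: the remaining axioms.} Next I would treat $G;F = \sum_{i,j} f_i^\partial;f_j$, as well as \textbf{[D$^\dagger$.2]} and \textbf{[D$^\dagger$.1]}. For \textbf{[D$^\dagger$.1]} the aim is to expand $(F;F^\dagger)^k = \sum_i (f_i;f_i^\dagger)^k$, reduce each cross term to a prefix $f_i;f_j^\dagger$ or to some $h_i^D;h_j$, and take $k$ to be the maximum of the individual indices, as in Lemma \ref{Drazin-sum}. The diagonal terms then collapse by the corresponding axioms for each $f_i$.

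\textbf{Main obstacle: the cross terms $f_i^\partial;f_j = h_i^D;f_i^\dagger;f_j$ for $i\neq j$.} Every rewriting I see of this term leaves a factor $f_i^\dagger;f_j$, and the one-sided hypothesis gives no control over it. My first attempt would be to write $f_i^\dagger = f_i^\dagger;f_i;f_i^\partial$ and absorb the extra factor into the orthogonality of the $h$'s.

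I expect this step to genuinely fail, and I checked this in $\mathsf{FHILB}$. Take $f_1(x) = \langle a,x\rangle e_1$ and $f_2(x) = \langle a,x\rangle e_2$ with $e_1\perp e_2$.
\begin{itemize}
\item The hypothesis holds: in diagrammatic order, $f_1;f_2^\dagger = f_2^\dagger\circ f_1$, which is $0$ because the ranges are orthogonal.
\item These maps are MP, so $f^\partial = f^\circ$ by Thm \ref{thm:MP=a-dag-Drazin}.
\item $F^\circ(y) = \langle e_1+e_2,y\rangle a/(2\|a\|^2)$, whereas $\sum_i f_i^\circ(y) = \langle e_1+e_2,y\rangle a/\|a\|^2$.
\end{itemize}
So, as worded, the identity $\left(\sum_i f_i\right)^\partial = \sum_i f_i^\partial$ fails. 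The argument of Steps 1--2 goes through exactly when the cross terms $f_i^\dagger;f_j$ also vanish, i.e. under the reading in which the orthogonality condition controls both $f_i;f_j^\dagger$ and $f_i^\dagger;f_j$. I would therefore present Steps 1--2 and flag this as the point where the stated hypothesis must be strengthened or reinterpreted.
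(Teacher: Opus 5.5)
Your analysis is correct: as worded, the lemma is false, and your $\mathsf{FHILB}$ example is a genuine counterexample. You should state that $e_1,e_2$ are unit vectors and $a\neq 0$. The smallest instance is $A=\mathbb{C}$, $B=\mathbb{C}^2$, $f_1=(1,0)^T$, $f_2=(0,1)^T$. Here $f_1;f_2^\dagger=f_2;f_1^\dagger=0$, but $(f_1+f_2)^\circ=\tfrac12(1,1)\neq(1,1)=f_1^\circ+f_2^\circ$. Since these maps are MP, the same example also refutes the MP corollary stated immediately after the lemma.

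The paper's proof fails at the assertion that $f_i;f_j^\dagger=0$ gives $f_i;f_i^\dagger;f_j;f_j^\dagger=0$. The only mixed factor in that composite is $f_i^\dagger;f_j$, which the hypothesis does not control; in the example, $f_1;f_1^\dagger=f_2;f_2^\dagger=1_{\mathbb{C}}$, so the composite is nonzero. Hence Lemma \ref{Drazin-sum} cannot be applied to the $f_i;f_i^\dagger$. What the hypothesis does yield is your orthogonality of the $f_i^\dagger;f_i$, but these do not sum to $F^\dagger;F$. The closing ``easily worked out'' step has the same defect: the cross terms $f_j^\dagger;(f_i;f_i^\dagger)^D=f_j^\dagger;f_i;f_i^\dagger;\bigl((f_i;f_i^\dagger)^D\bigr)^2$ vanish only if $f_j^\dagger;f_i=0$.

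Your repair, requiring both $f_i;f_j^\dagger=0$ and $f_i^\dagger;f_j=0$ for $i\neq j$, is the right one, and neither half alone suffices. Daggering your example (using $f^{\dagger\partial}=f^{\partial\dagger}$) gives a family $g_i=f_i^\dagger$ with $g_i^\dagger;g_j=0$ for which additivity also fails.

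Under the two-sided hypothesis, the paper's intended route is shorter than your axiom-by-axiom check. $F;F^\dagger=\sum_i f_i;f_i^\dagger$ is then a sum of pairwise orthogonal Drazin maps, so Lemma \ref{Drazin-sum} gives $(F;F^\dagger)^D=\sum_i(f_i;f_i^\dagger)^D$. Thm \ref{thm:Drazin=dag-Drazin} then delivers both that $F$ is $\dagger$-Drazin and that $F^\partial=F^\dagger;(F;F^\dagger)^D=\sum_i f_i^\partial$, with no separate verification of \textbf{[D$^\dagger$.1]}. Your direct check also works if you take $k\geq 1$ in \textbf{[D$^\dagger$.1]}. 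Then $(F;F^\dagger)^k=\sum_i(f_i;f_i^\dagger)^k$, and every cross term contains a factor $f_i^\dagger;f_j$ or $f_i;f_j^\dagger$. Its advantage is that it shows exactly which cross terms each axiom uses.
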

\begin{proof} First observe that since $f_i; f^\dagger_j=0$ for all $i \neq j$, we get that $\left( \sum \limits^n_{i=1} f_i\right); \left( \sum \limits^n_{i=1} f_i\right)^\dagger= \sum \limits^n_{i=1} f_i;f_i^\dagger$. Now since each $f_i$ is $\dagger$-Drazin, we have that $f_i;f_i^\dagger$ is Drazin by Thm \ref{thm:Drazin=dag-Drazin}. Moreover, we have that since $f_i; f^\dagger_j=0$ for all $i \neq j$, we also get that $f_i;f_i^\dagger;f_j;f_j^\dagger=0$ for all $i \neq j$. Thus by Lem \ref{Drazin-sum}, we get that $\left( \left( \sum \limits^n_{i=1} f_i\right); \left( \sum \limits^n_{i=1} f_i\right)^\dagger \right)^D = \left(\sum \limits^n_{i=1} f_i;f_i^\dagger\right)^D = \sum \limits^n_{i=1} (f_i;f_i^\dagger)^D$. Therefore we know that $\sum \limits^n_{i=1} f_i$ is also $\dagger$-Drazin via the formula from Thm \ref{thm:Drazin=dag-Drazin}, which is easily worked out to be $\left(\sum \limits^n_{i=1} f_i \right)^\partial = \sum \limits^n_{i=1} f_i^\partial$ as desired. 
\end{proof}

\begin{corollary} In a dagger category enriched over commutative monoids, let $\lbrace f_i: A \to B \rbrace^n_{i=1}$ be a family of MP maps such that $f_i; f^\dagger_j=0$ for all $i \neq j$. Then $\sum \limits^n_{i=1} f_i$ is MP where $\left(\sum \limits^n_{i=1} f_i \right)^\circ = \sum \limits^n_{i=1} f_i^\circ$.
\end{corollary}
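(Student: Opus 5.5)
The plan is to reduce this to Lem \ref{lem:dag-draz-sum} via Thm \ref{thm:MP=a-dag-Drazin}. Each $f_i$ is MP, so by Thm \ref{thm:MP=a-dag-Drazin} it is $\dagger$-Drazin with $f_i^\partial = f_i^\circ$ and $f_i^{\partial\partial} = f_i$. Since $f_i;f_j^\dagger = 0$ for $i \neq j$, Lem \ref{lem:dag-draz-sum} tells us that $F := \sum_{i=1}^n f_i$ is $\dagger$-Drazin with $F^\partial = \sum_{i=1}^n f_i^\circ$. By Thm \ref{thm:MP=a-dag-Drazin} again, it then suffices to show $F^{\partial\partial} = F$. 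Once that holds, $F$ is MP and $F^\circ = F^\partial = \sum_{i=1}^n f_i^\circ$.

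To get $F^{\partial\partial}=F$, I would use $F^{\partial\partial} = F;F^\partial;F$, which is \cite[Prop 2.6.(iv)]{cockett2025dagger} as quoted in the paper. Expanding gives $\sum_{i,j,l} f_i;f_j^\circ;f_l$, so the work is to show that every cross term vanishes.

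For this I would use the identities $f_j^\circ = f_j^\dagger;(f_j;f_j^\dagger)^D = (f_j^\dagger;f_j)^D;f_j^\dagger$ from Thm \ref{thm:Drazin=dag-Drazin}. If $i\neq j$, then $f_i;f_j^\circ = f_i;f_j^\dagger;(f_j;f_j^\dagger)^D = 0$. If $j\neq l$, then $f_j^\circ;f_l = (f_j^\dagger;f_j)^D;f_j^\dagger;f_l$, and $f_j^\dagger;f_l = (f_l^\dagger;f_j)^\dagger = 0$ because $f_l;f_j^\dagger = 0$ and hence its adjoint $f_j;f_l^\dagger$ is also zero. Here we use that the dagger preserves zero maps, which holds because $0 = 0;0^\dagger$-style arguments, or more simply because $f_l^\dagger;f_j = (f_j^\dagger;f_l)^\dagger$ and the orthogonality hypothesis applied to the pair $(l,j)$ gives $f_l;f_j^\dagger = 0$, whose adjoint is $f_j;f_l^\dagger$. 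So only the diagonal terms survive, and
\[F;F^\partial;F = \sum_{i=1}^n f_i;f_i^\circ;f_i = \sum_{i=1}^n f_i = F\]
by \textbf{[MP.1]}.

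The main obstacle is the zero-absorption and dagger-of-zero bookkeeping. I need $g;0 = 0$ and $0;g = 0$, which is part of commutative monoid enrichment, and I need $0^\dagger = 0$. For the latter, $0^\dagger = (0;0)^\dagger = 0^\dagger;0^\dagger$ is not enough by itself. Instead I would argue $0^\dagger = (0_{A,B};g)^\dagger$ for $g$ itself zero and push through; if that stalls, I would assume, as the paper does in its setting, that the dagger is a monoid morphism on homsets. Alternatively, the same cancellations can be checked directly against \textbf{[MP.1]}–\textbf{[MP.4]} for $\sum_i f_i^\circ$, which avoids Lem \ref{lem:dag-draz-sum} and serves as a fallback route.
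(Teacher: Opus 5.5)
Your reduction follows the paper's own proof: each $f_i$ is $\dagger$-Drazin with $f_i^\partial=f_i^\circ$, and then you invoke Lem~\ref{lem:dag-draz-sum}. You are also right that Thm~\ref{thm:MP=a-dag-Drazin} additionally demands $F^{\partial\partial}=F$, a check the paper's one-line proof skips. Your $i\neq j$ cancellation $f_i;f_j^\circ=f_i;f_j^\dagger;(f_j;f_j^\dagger)^D=0$ is correct. The zero bookkeeping you single out is harmless: $0_{B,A}=0_{B,B};0_{A,B}^\dagger$, and taking adjoints gives $0_{B,A}^\dagger=0_{A,B};0_{B,B}^\dagger=0_{A,B}$. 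The real gap is the $j\neq l$ step. You need $f_j^\dagger;f_l=0$. The adjoint of $f_j^\dagger;f_l$ is $f_l^\dagger;f_j$, which still has the dagger on the first factor. The hypothesis and all of its adjoints only yield equations $f_i;f_j^\dagger=0$, with the dagger on the second factor. Your sentence conflates $(f_j^\dagger;f_l)^\dagger$ with $(f_l;f_j^\dagger)^\dagger=f_j;f_l^\dagger$. These are genuinely different orthogonality conditions. In $\mathsf{FHILB}$, $f_i;f_j^\dagger=0$ says $\mathrm{im}(f_i)\perp\mathrm{im}(f_j)$, whereas $f_j^\dagger;f_l=0$ says $\mathrm{im}(f_j^\dagger)\perp\mathrm{im}(f_l^\dagger)$.

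No argument can close this gap, because the statement fails under the stated hypothesis. In $\mathsf{FHILB}$ take $A=\mathbb{C}$, $B=\mathbb{C}^2$ and $f_i=\vert e_i\rangle$ for $i=1,2$. Then $f_i;f_j^\dagger=\langle e_j\vert e_i\rangle=0$ for $i\neq j$. Each $f_i$ is an isometry, so $f_i^\circ=f_i^\dagger=\langle e_i\vert$. For $F=f_1+f_2$ one gets $F;(f_1^\circ+f_2^\circ);F=2F\neq F$, so \textbf{[MP.1]} fails for $\sum_i f_i^\circ$; the actual inverse is $F^\circ=\tfrac12(\langle e_1\vert+\langle e_2\vert)$. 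The surviving cross term is exactly the one you tried to kill: $f_1;f_1^\circ;f_2=\vert e_2\rangle\neq 0$, because $f_1^\dagger;f_2=\vert e_2\rangle\langle e_1\vert\neq 0$. Hence your fallback direct check fails too. The cited Lem~\ref{lem:dag-draz-sum} is itself false as stated, since $\mathsf{FHILB}$ is MP and so $F^\partial=F^\circ$ here. Its proof asserts $f_i;f_i^\dagger;f_j;f_j^\dagger=0$, which needs precisely $f_i^\dagger;f_j=0$; here $f_1;f_1^\dagger;f_2;f_2^\dagger=1_{\mathbb{C}}$.

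The repair is to add the hypothesis $f_i^\dagger;f_j=0$ for $i\neq j$. With it the lemma holds and your computation goes through: $f_j^\circ;f_l=(f_j^\dagger;f_j)^D;f_j^\dagger;f_l=0$ yields $F;F^\partial;F=\sum_i f_i;f_i^\circ;f_i=F$. Your fallback then becomes the cleaner proof, needing no Drazin theory. The identities $f_j^\circ=f_j^\dagger;f_j^{\circ\dagger};f_j^\circ=f_j^\circ;f_j^{\circ\dagger};f_j^\dagger$ (from \textbf{[MP.2]}--\textbf{[MP.4]}) kill every mixed term $f_i;f_j^\circ$ and $f_j^\circ;f_l$. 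So \textbf{[MP.1]}--\textbf{[MP.4]} for $\sum_i f_i^\circ$ reduce to the diagonal terms. For \textbf{[MP.3]}--\textbf{[MP.4]} you also need the dagger to be additive, which the paper tacitly assumes when it expands $(\sum_i f_i)^\dagger$.
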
 
\begin{proof} This follows immediately from the fact that MP inverses are special cases of $\dagger$-Drazin inverses and applying Lem \ref{lem:dag-draz-sum}.   
\end{proof}

\begin{corollary}\label{cor:dag-draz-phi-sum} In a $\dagger$AKCC, let $\lbrace f_i: A \to B \rbrace^n_{i=1}$ be a family of $\dagger$-Drazin (resp. MP) maps such that $f_i f_j=0$ for all $i \neq j$, and let $\Phi = \sum\limits_{i=1}^n [f_i^\dagger,f_i]$. Then $\Phi$ is CP and $\dagger$-Drazin (resp. MP), and also $\Phi^\partial = \sum\limits_{i=1}^n [f_i^{\partial\dagger},f_i]$ (resp. $\Phi^\circ = \sum\limits_{i=1}^n [f_i^{\circ\dagger},f_i]$), and thus $\Phi^\partial$ (resp. $\Phi^\circ$) is CP. Furthermore, if $\sum\limits_{i=1}^n f_i;f_i^\dagger = 1_A$ (resp. $\sum\limits_{i=1}^n f_i^\dagger;f_i = 1_B$), then both $\Phi$ and $\Phi^\partial$ (resp. $\Phi^\circ$) are TP (resp. U). 
\end{corollary}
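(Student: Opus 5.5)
The plan is to reduce everything to Lemma \ref{lem:dag-draz-sum} (resp.\ its MP corollary), applied not to the $f_i$ but to the pure summands $\Phi_i := [f_i^\dagger, f_i] = f_{i\ast} \otimes f_i$.

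I read the orthogonality hypothesis as $f_i;f_j^\dagger = 0$ for $i \neq j$. This is what Lemma \ref{lem:dag-draz-sum} requires, and the literal condition $f_if_j=0$ only typechecks when $A=B$. I also read the intended formula as $\Phi^\partial = \sum_i [f_i^{\partial\dagger}, f_i^\partial]$, and similarly for $\circ$.

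\textbf{Step 1.} By the earlier lemma on pure maps, since each $f_i$ is $\dagger$-Drazin (resp.\ MP), each $\Phi_i$ is $\dagger$-Drazin (resp.\ MP). Moreover $\Phi_i^\partial = [f_i^{\partial\dagger}, f_i^\partial]$ (resp.\ $\Phi_i^\circ = [f_i^{\circ\dagger}, f_i^\circ]$). This uses that $[(-)^\dagger,-]$ is dagger-absolute.

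\textbf{Step 2.} Check the orthogonality needed for Lemma \ref{lem:dag-draz-sum}, namely $\Phi_i;\Phi_j^\dagger = 0$ for $i\neq j$. Since $\otimes$ is a dagger functor, $\Phi_j^\dagger = f_j^{\ast} \otimes f_j^\dagger$. Hence $\Phi_i;\Phi_j^\dagger = (f_{i\ast};f_j^\ast)\otimes(f_i;f_j^\dagger) = (f_{i\ast};f_j^\ast)\otimes 0 = 0$. The last equality holds because in a $\dagger$AKCC the monoidal product preserves sums, and in particular zero maps.

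\textbf{Step 3.} Lemma \ref{lem:dag-draz-sum} (resp.\ the MP corollary) now gives that $\Phi$ is $\dagger$-Drazin (resp.\ MP) with
\[ \Phi^\partial = \sum_i \Phi_i^\partial = \sum_i [f_i^{\partial\dagger}, f_i^\partial]. \]
This is a mixed pure map, so it is CP by the lemma on mixed pure maps; likewise $\Phi$ itself is CP.

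\textbf{Step 4: TP and U.} The lemma on mixed pure maps shows $\Phi$ is TP iff $\sum_i f_i;f_i^\dagger = 1_A$, and U iff $\sum_i f_i^\dagger;f_i = 1_B$. When both hold, Prop \ref{dag-Drazin-TP} (resp.\ Prop \ref{MP-TP}) immediately gives that $\Phi^\partial$ (resp.\ $\Phi^\circ$) is TP and U.

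\textbf{Main obstacle.} The ``resp.'' reading, with only one of TP or U assumed, is where I expect trouble. The general results require both conditions: the MP inverse of a TP map need not be TP. So I would first try to show directly that $\sum_i f_i^\partial;f_i^{\partial\dagger} = 1_A$, using the identities of Thm \ref{thm:Drazin=dag-Drazin} together with orthogonality. If that fails, I would state the TP/U conclusion under the joint hypothesis, which is what Prop \ref{dag-Drazin-TP} supports.
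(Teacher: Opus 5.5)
Your route (pure-map lemma, then Lemma~\ref{lem:dag-draz-sum} on the summands $\Phi_i=[f_i^\dagger,f_i]$, then the mixed-pure lemma and Prop~\ref{dag-Drazin-TP}) is the derivation the paper has in mind; the paper states the corollary without proof right after that lemma. Your reading $[f_i^{\partial\dagger},f_i^\partial]$ of the second typo is correct. Your reading $f_i;f_j^\dagger=0$ of the first matches Lemma~\ref{lem:dag-draz-sum}, but this is exactly where Step~3 breaks. Lemma~\ref{lem:dag-draz-sum} is false under that one-sided hypothesis. Its proof asserts $f_i;f_i^\dagger;f_j;f_j^\dagger=0$, which needs $f_i^\dagger;f_j=0$, not $f_i;f_j^\dagger=0$.

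Here is a counterexample in $\mathsf{FHILB}$. Let $f_1,f_2:\mathbb{C}\to\mathbb{C}^2$ pick out $e_1,e_2$. Then $f_1;f_2^\dagger=0=f_2;f_1^\dagger$ and $f_i^\partial=f_i^\circ=f_i^\dagger$. But $G=f_1^\dagger+f_2^\dagger$ satisfies $G;(f_1+f_2);G=2G$, violating \textbf{[D$^\dagger$.2]}; the actual inverse is $(f_1+f_2)^\partial=\tfrac12 G$. The same pair breaks your Step~3 directly. Here $\Phi(\rho)=\rho\cdot 1_{\mathbb{C}^2}$, i.e.\ $\Phi=\mathsf{u}^{\mathbb{C}^2}$, while $\sum_i[f_i^{\circ\dagger},f_i^\circ]=\mathsf{tr}^{\mathbb{C}^2}$. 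Since $\mathsf{u}^{\mathbb{C}^2};\mathsf{tr}^{\mathbb{C}^2}=2$, we get $\Phi^\circ=\tfrac12\,\mathsf{tr}^{\mathbb{C}^2}$, not $\mathsf{tr}^{\mathbb{C}^2}$.

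What Step~2 is missing is the second orthogonality $\Phi_i^\dagger;\Phi_j=(f_i^\ast;f_{j\ast})\otimes(f_i^\dagger;f_j)=0$. The repair is to assume two-sided orthogonality, $f_i;f_j^\dagger=0=f_i^\dagger;f_j$ for $i\neq j$; the paper's closing example $e_i=\pi_i;\iota_i$ satisfies this. Under it the sum lemma holds. For $F=\sum_i f_i$, Lemma~\ref{Drazin-sum} gives $(F;F^\dagger)^D=\sum_i(f_i;f_i^\dagger)^D$. The cross terms $f_j^\dagger;(f_i;f_i^\dagger)^D$ of $F^\partial=F^\dagger;(F;F^\dagger)^D$ vanish, because $(f_i;f_i^\dagger)^D=f_i;f_i^\dagger;(f_i;f_i^\dagger)^D;(f_i;f_i^\dagger)^D$ by \textbf{[D.2]} and \textbf{[D.3]}. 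In the MP case the family $\{f_i^\circ\}$ is again two-sided orthogonal, so applying the lemma twice gives $F^{\partial\partial}=F$. Your tensor computation from Step~2 then supplies both orthogonalities for the $\Phi_i$.

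On Step~4 your suspicion is justified: the ``resp.'' reading is false. So a direct proof of $\sum_i f_i^\partial;f_i^{\partial\dagger}=1$ cannot succeed; note also that this identity would have to be $1_B$, not $1_A$. Take $n=1$ and $f=f_1:\mathbb{C}\to\mathbb{C}^2$ as above. Since $f;f^\dagger=1_{\mathbb{C}}$, the map $\Phi=[f^\dagger,f]$ is CPTP. But $\Phi^\circ=[f,f^\dagger]$ is $\sigma\mapsto\sigma_{11}$, which is not TP. Dually, $f^\dagger$ gives a unital channel whose MP inverse is not unital. So only the joint version holds: if $\sum_i f_i;f_i^\dagger=1_A$ and $\sum_i f_i^\dagger;f_i=1_B$, then $\Phi^\partial$ (resp.\ $\Phi^\circ$) is TP and U. In that case your appeal to Prop~\ref{dag-Drazin-TP} (resp.\ Prop~\ref{MP-TP}) is the right argument. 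With only one of the two conditions, only $\Phi$ itself inherits it.
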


It is easy enough to find examples that satisfy the necessary conditions in Cor \ref{cor:draz-phi-sum} and Cor \ref{cor:dag-draz-phi-sum}. Indeed, given a biproduct $\bigoplus\limits^n_{i=1} A_i$, consider the maps $e_i=\pi_i;\iota_i$. Each $e_i$ is self-adjoint and idempotent, and moreover, since we have dagger biproducts we have that $e_i;e_j=0$ when $i \neq j$ and also $\sum^n_{i=1} e_i^2 = 1_{\bigoplus\limits^n_{i=1} A_i}$. Therefore $\sum\limits^n_{i=1} \left[ e_i^\dagger, e_i \right]: \left[\bigoplus\limits^n_{i=1} A_i, \bigoplus\limits^n_{i=1} A_i\right] \to \left[\bigoplus\limits^n_{i=1} A_i, \bigoplus\limits^n_{i=1} A_i\right]$ is a UCPTP, which turns out to be its own Drazin/$\dagger$-Drazin/MP inverse. So this gives us an example of an abstract quantum channel, whose generalized inverse is a quantum channel. 



\newpage 
\nocite{*}
\bibliographystyle{eptcs}
\bibliography{qpl2026}
\end{document}